\journalname{Communications in Mathematical Physics}
\newtheorem{thm}    {Theorem}
\newtheorem{lem}     {Lemma}
\newtheorem{cor}  {Corollary}
\newtheorem{prop}        {Proposition}
\def\Tr{\mathop{\rm Tr}\nolimits}
\def\rank{\mathop{\rm rank}\nolimits}
\def\MSE{\mathop{\rm MSE}\nolimits}
\def\integer{\mathbb{Z}}
\def\R{\mathbb{R}}
\def\C{\mathbb{C}}
\def\id{\mathop{\rm id}\nolimits}
\def\rank{\mathop{\rm rank}\nolimits}
\def\complex{\mathbb{C}}
\def\Label#1{\label{#1}\ [\ #1\ ]\ }
\def\Label{\label}
\begin{document}

\title{{Comparison between the
Cramer-Rao and the mini-max approaches in quantum channel estimation}}
\titlerunning{Comparison between Cramer-Rao and mini-max approaches}

\author{Masahito Hayashi}
\institute{Graduate School of Information Sciences, Tohoku University, Sendai, 980-8579, Japan.\\ 
Centre for Quantum Technologies, National University of Singapore, 3 Science Drive 2, Singapore 117542\\
\email{hayashi@math.is.tohoku.ac.jp}}
\authorrunning{Masahito Hayashi}

\date{Received:}
\communicated{name}

\maketitle
\begin{abstract}
In a unified viewpoint
in quantum channel estimation,
we compare
the Cram\'{e}r-Rao and the mini-max approaches,
which gives the Bayesian bound in the group covariant model.
For this purpose, we introduce the {\it local asymptotic mini-max} bound, 
whose maximum is shown to be equal to the asymptotic limit of the mini-max bound.
It is shown that the local asymptotic mini-max bound is strictly larger than the Cram\'{e}r-Rao bound
in the phase estimation case
while the both bounds coincide when 
the minimum mean square error decreases with the order $O(\frac{1}{n})$.
We also derive a sufficient condition for that 
the minimum mean square error decreases with the order $O(\frac{1}{n})$.
\end{abstract}

\section{Introduction}\Label{s1}

In quantum information technology,
it is usual to use quantum channel for sending quantum state.
Since a quantum channel has noise, it is important to identify quantum channel.
In this paper, we consider 
theoretical optimal performance of quantum channel estimation
when we can apply the same unknown channel several times.
In order to treat this problem,
we employ quantum state estimation theory.
In our setting,
we can optimize our input state and our measurement\cite{F4,FI2,FI1,F3,IF,Ha1,CDPS,BB,Luis,BDR,IH,K,HKO,keiji}.
As is illustrated in Fig. \ref{f1} with $n=4$,
it is assumed to be possible 
to use entanglement with reference system in the measurement process
when 
the channel $\Lambda_\theta$ with the unknown parameter $\theta$ is applied $n$ times.
This setting is mathematically equivalent with the setting given in Fig. \ref{f2} with $n=4$,
which has a single input state in the large input system and a single measurement in the large output system.
In this paper, we consider quantum channel estimation with
the formulation given by Fig. \ref{f2}.

\begin{figure}[htbp]
\begin{center}
\scalebox{1.0}{\includegraphics[scale=0.4]{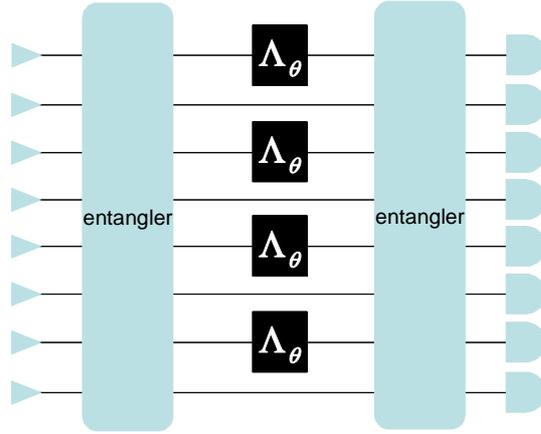}}
\end{center}
\caption{Estimation scheme of quantum channel}
\Label{f1}
\end{figure}%

\begin{figure}[htbp]
\begin{center}
\scalebox{1.0}{\includegraphics[scale=0.4]{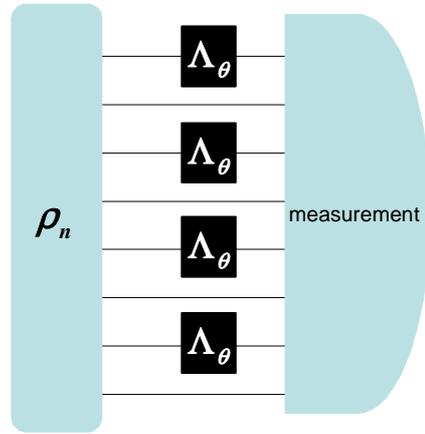}}
\end{center}
\caption{Simpler estimation scheme of quantum channel}
\Label{f2}
\end{figure}%

In the state estimation,
when the number $n$ of prepared states goes to infinity,
the mean square error (MSE) behaves as the order $O(\frac{1}{n})$
as in the estimation of probability distribution.
However, in the estimation of quantum channel, two different analyses
were reported concerning asymptotic behavior of MSE.
As the first case,
in the estimations of depolarizing channels and Pauli channels,
the optimal MSE behaves as $O(\frac{1}{n})$\cite{F4,FI1}.
As the second case,
in the estimation of unitary,
the optimal MSE behaves as $O(\frac{1}{n^2})$\cite{F3,IF,Ha1,CDPS,BB,Luis,BDR,IH}.
In the second case, two different types of results were reported:
One is based on the Cram\'{e}r-Rao approach\cite{F3,IF}.
The other is based on the mini-max approach\cite{Ha1,CDPS,BB,Luis,BDR,IH}.

The Cram\'{e}r-Rao approach is based on the notion of locally unbiased estimator, and allows one
to give a simple lower bound (the Cram\'{e}r-Rao bound) to the mean square error
(MSE) at a given point. 
The mini-max approach aims to minimize the
maximum of the MSE over all possible values of the parameter.
The mini-max approach is more meaningful than the Cram\'{e}r-Rao approach
because the true value of the parameter is unknown.
So, the Cram\'{e}r-Rao bound is just a lower bound in general, 
while it can be asymptotically acheived in the case of quantum state estimation 
with the $n$ copies of the unknown state.
However, the Cram\'{e}r-Rao bound has been considered so many times in the literatures\cite{F4,FI1,F3,IF}. 
The reason seems that computing the mini-max bound is a much harder problem. 
Indeed, there are only very few examples of calculations of the mini-max bound, 
and most of them are in the compact group covariant setting,
in which, as was shown by Holevo \cite{HoC},
the mini-max bound coincides with the Bayesian average of the MSE 
over the normalized invariant measure.
So, many researchers \cite{Ha1,CDPS,BB,Luis,BDR,IH} applied this approach
to quantum channel estimation under the group covariance.
As the most simple case for unitary estimation,
phase estimation has been treated with the mini-max approach
by several papers\cite{Luis,BDR,IH}, 
and it has been shown that the minimum MSE that behaves as $O(\frac{1}{n^2})$.
On the other hand, the Cram\'{e}r-Rao approach suggests the noon state as the optimal input\cite{Giovannetti1,Giovannetti2}
in phase estimation.
The later estimation scheme was experimentally demonstrated 
in the case of $n=4$ \cite{take1,take2}and $n=10$ \cite{Mag}.
Also, another group \cite{Heisenberg-limit} experimentally demonstrated an estimation 
protocol concerning the group covariant approach proposed by \cite{Kitaev}.
In phase estimation with group covariant framework,
the asymptotic minimum MSE behaves as $O(\frac{1}{n^2})$\cite{Luis,BDR}.
When we focus on the difference $\hat{\theta}-\theta$ between 
the true parameter $\theta$ and the estimate $\hat{\theta}$,
the limiting distribution concerning the random variable $n(\hat{\theta}-\theta)$
can be obtained through Fourier transform of a function with a finite domain \cite{IH}.
However, the Cram\'{e}r-Rao bound is different from the asymptotic limit of the mini-max bound.
So, these results seem to contradict with each other.
No existing research compares both approaches in a unified viewpoint.

The manuscript consists of two parts. 
In the first part, we discuss
the Cram\'{e}r-Rao approach for channel estimation, 
and give a simple formula to
compute a bound based on the right logarithmic derivative introduced by Holevo\cite{2}.
This formula is based on the Choi-Jamiolkowski representation matrix \cite{Choi,Jami} 
of the quantum channel,
and holds under the condition that, at the given point, 
the support of the Choi-Jamiolkowski representation matrix 
contains the support of its derivative.
Under this condition, we prove the additivity of the RLD Fisher information for quantum channels. 

The second part is about the local asymptotic mini-max bound\cite{Hajek}, 
in which the maximum of the MSE is taken over an interval, 
the width of which is sent to zero after computing the asymptotic limits. 
The obtained results are summarized as follows:
(1) The local asymptotic mini-max bound is strictly larger than the Cram\'{e}r-Rao bound in the case of phase estimation.
(2) Both bounds coincide with each other when 
the asymptotic minimum MSE behaves as $O(\frac{1}{n})$.
(3) The local asymptotic mini-max bound is achievable (Proposition 3),
using a variation of the two-step strategy \cite{5,4}. 
That is, there is an optimal sequence of estimators that achieves the local asymptotic mini-max bound at any point.
As consequence, the maximum of the local asymptotic mini-max bound is shown to be equal to the asymptotic limit of the mini-max bound.

However, the Cram\'{e}r-Rao bound has a different type of achievability.
That is, there is an optimal sequence of estimators that achieves the Cram\'{e}r-Rao at a given point. 
This characteristic is a curious quantum analogue of superefficiency in classical statistics.
In estimation of probability distribution,
if we assume a weaker condition for our estimator,
there exists an estimator that surpasses the Cram\'{e}r-Rao bound only in measure zero points.
Such an estimator is called a superefficient estimator\cite{L}.
In the estimation of phase action, we point out that 
the Cram\'{e}r-Rao bound can be attained only by a quantum channel version of a superefficient estimator that works at specific points.
Since the Cram\'{e}r-Rao approach is based on the asymptotically locally unbiased condition, 
we can conclude that the asymptotically locally unbiased condition is too weak for 
deriving the local asymptotic mini-max bound, which can be attained in all points.
Indeed, a similar phenomenon happens in quantum state estimation 
when we use the large deviation criterion\cite{Ha2}.

This paper is organized as follows.
Some of obtained results are based on quantum state estimation with 
the Cram\'{e}r-Rao approach.
Section \ref{s2} is devoted to a review of the Cram\'{e}r-Rao approach in quantum state estimation.
In this section, the symmetric logarithmic derivative (SLD) Fisher information
and the right logarithmic derivative (RLD) Fisher information
are explained.
However, the Cram\'{e}r-Rao bound 
is obtained only by the locally unbiased condition.
So, it is needed to discuss its relation with the estimator that works globally.
In Section \ref{s3}, we treat
SLD Fisher information and RLD Fisher information
in the quantum channel estimation,
and discuss the increasing order of 
SLD Fisher information.
In Section \ref{s31},
we give several examples where the maximum SLD Fisher information increases with $O(n^2)$.
In Section \ref{s4}, we compare the local asymptotic mini-max bound and the Cram\'{e}r-Rao bound.
We also show the global attainability of the local asymptotic mini-max bound in the channel estimation.
It is also shown that the Cram\'{e}r-Rao bound is closely related to a quantum channel version of superefficiency in the phase estimation.

\section{Cram\'{e}r-Rao bound in quantum state estimation}\Label{s2}
In quantum state estimation,
we estimate the true state through the quantum measurement 
under the assumption that the true state 
of the given quantum system ${\cal H}$ belongs to a certain parametric state family
$\{\rho_\theta | \theta \in \Theta \subset \R^d \}$.
In the following, 
we consider the case when the number $d$ of parameters is one.
Usually, we assume that
$n$ quantum systems are prepared in the state $\rho_\theta$.
Hence, the total system is described by the tensor product space ${\cal H}^{\otimes n}$,
and the state of the total system is given by $\rho_\theta^{\otimes n}$.

In this case, when we choose a suitable measurement, 
the MSE decreases in proportion to $n^{-1}$ as in the estimation of
probability distribution.
So, we focus on the first order coefficient of the 
MSE concerning $\frac{1}{n}$.
In the most general setting, any positive operator valued measure (POVM) $M^n$
on the total system ${\cal H}^{\otimes n}$ 
is allowed as an estimator
when it takes values in the parameter space $\Theta \subset \R$.
The MSE is given as
\begin{align*}
\MSE_\theta(M^n):=
\int (\hat{\theta}-\theta)^2 \Tr \rho_\theta^{\otimes n} M^n(d \hat{\theta}).
\end{align*}

In the quantum case, there are several quantum extensions of Fisher information
when the state $\rho_\theta$ is differentiable at $\theta$
and $(I-P)\frac{d \rho_\theta}{d \theta}(I-P)=0$, where
$P$ is the projection to the support of $\rho_\theta$.
The largest one is the right logarithmic derivative (RLD) Fisher information
$J^R_\theta$, 
and the smallest one is symmetric logarithmic derivative (SLD)
Fisher information $J^S_\theta$.
For these definitions, we define the RLD $L_\theta^R$ and the SLD $L_\theta^S$ 
as the operators satisfying 
\begin{align*}
\frac{d \rho_\theta}{d \theta}=
\rho_\theta L_\theta^R , \quad
\frac{d \rho_\theta}{d \theta}=
\frac{1}{2}
\left(L_\theta^S \rho_\theta+\rho_\theta L_\theta^S \right) .
\end{align*}
Then, the RLD and SLD Fisher informations are given by \cite{1,2,3}
\begin{align*}
J^R_\theta := \Tr \rho_\theta L_\theta^R (L_\theta^R)^\dagger, \quad
J^S_\theta := \Tr \rho_\theta (L_\theta^S)^2.
\end{align*}
When the range of $\rho_\theta$ contains the range of 
$(\frac{d \rho_\theta}{d \theta})^2$,
the RLD Fisher information has another expression:
\begin{align}
J^R_\theta= \Tr (\frac{d \rho_\theta}{d \theta})^2 \rho_\theta^{-1}.
\Label{20-11-a}
\end{align}
When the state family $\{\rho_\theta\}$
is given by $\rho_\theta:=e^{i \theta H}|u\rangle \langle u|
e^{-i \theta H}$, 
the condition (\ref{20-11-a}) does not hold,
where $H$ is an Hermitian matrix.
In this case, the SLD Fisher information is calculated as follows\cite{FN}.
\begin{align}
4(\langle u|H^2|u\rangle -\langle u|H|u\rangle^2).\Label{21-1}
\end{align}

Now, we introduce the unbiased condition by
\begin{align*}
\int \hat{\theta} \Tr \rho_\theta^{\otimes n} M^n(d \hat{\theta})=\theta, \quad
\forall \theta\in \Theta.
\end{align*}
However, this condition is sometimes too restrictive in the asymptotic setting.
So, we consider the Taylor expansion at a point $\theta_0$ and focus on 
the first order.
Then, we obtain the locally unbiased condition at $\theta_0$:
\begin{align*}
\int \hat{\theta} \Tr \rho_{\theta_0}^{\otimes n} M^n(d \hat{\theta})=\theta_0, \quad
\frac{d}{d \theta}\int \hat{\theta} \Tr \rho_\theta^{\otimes n}
M^n(d \hat{\theta})|_{\theta=\theta_0}=1.
\end{align*}
Under the locally unbiased condition at $\theta_0$,
an application of Schwarz inequality similar to
the classical case 
yields the quantum Cram\'{e}r-Rao inequalities for both quantum Fisher information.
\begin{align}
\MSE_{\theta_0} (M^n) & \ge \frac{1}{n} (J^R_{\theta_0})^{-1}\Label{12-1-2} \\
\MSE_{\theta_0} (M^n) & \ge \frac{1}{n} (J^S_{\theta_0})^{-1}\Label{12-1-1}.
\end{align}
Since $J^R_\theta$ is greater than $J^S_\theta$,
the inequality (\ref{12-1-1})
is more informative than the inequality (\ref{12-1-2}).
When the estimator $M^n$ is the spectral decomposition of the operator
$ \theta_0 I+ \frac{1}{n J^S_{\theta_0}}(L_{\theta_0}^{S,(1)}+\cdots+ L_{\theta_0}^{S,(n)})$,
the equality in (\ref{12-1-1}) holds,
where $X^{(j)}$ is given as $I^{\otimes j-1} \otimes X\otimes I^{\otimes n-j}$.
Then, we obtain the following inequality
\begin{align*}
J^R_\theta \ge J^S_\theta.
\end{align*}
This inequality seems to imply that $J^R_\theta$ is not as meaningful as $J^S_\theta$
in the one-parametric case.
However, as will be explained latter, 
$J^R_\theta$ provides a meaningful bound for MSE in the case of channel estimation.

In fact, in the asymptotic setting,
a suitable estimator usually satisfies the asymptotic locally unbiased condition:
\begin{align*}
\lim_{n \to \infty}
\int \hat{\theta} \Tr \rho_{\theta_0}^{\otimes n} M^n(d \hat{\theta})=\theta_0, \quad
\lim_{n \to \infty}
\left. \frac{d}{d \theta}
\int \hat{\theta} \Tr \rho_\theta^{\otimes n} M^n(d \hat{\theta})
\right|_{\theta=\theta_0}=1
\end{align*}
for all points $\theta_0$.
Under the above condition,
using (\ref{12-1-1}), 
we obtain the inequality
\begin{align}
\limsup_{n \to\infty} n \MSE_\theta(M^n)
\ge
(J^S_\theta)^{-1}.\Label{22-1}
\end{align}
Further, by using the two-step method,
the bound $(J^S_\theta)^{-1}$
can be universally attained for any true parameter $\theta$ \cite{4,5}.
So, defining the Cram\'{e}r-Rao bound:
\begin{align*}
C_\theta:=
\inf_{\{M^n\}} \left\{\limsup_{n \to\infty} n \MSE_\theta(M^n)
\left|
\begin{array}{l}
\{M^n\}\hbox{ satisfies the asymptotic}\\
\hbox{ locally unbiased condition.}
\end{array}
\right.
\right\},
\end{align*}
we obtain
\begin{align*}
C_\theta=(J^S_\theta)^{-1}.
\end{align*}

On the other hand, its multi-parameter case is much complicated even in the asymptotic setting \cite{q-cent,GK,GJ,GJK,spin}.
So, this paper does not treat the multi-parameter case.

\section{Maximum SLD and RLD Fisher informations in quantum channel estimation}\Label{s3}
In this section, we apply the Cram\'{e}r-Rao approach to estimation of channel.
In the quantum system,  
the channel is given by a trace preserving completely positive (TP-CP) map $\Lambda$
from the set of densities on the input system ${\cal H}:=\C^d$
to the set of densities on the output system ${\cal K}:=\C^{d'}$.
By using $dd'$ linear maps $F_i$ from ${\cal S}({\cal H})$ to 
${\cal S}({\cal K})$,
any TP-CP map $\Lambda$ can be described by 
$\Lambda(\rho)=\sum_{i=1}^{dd'}F_i \rho F_i^\dagger$.
Hence, our task is to estimate the true TP-CP map under the assumption that 
the true TP-CP map belongs to a certain family of TP-CP maps
$\{\Lambda_\theta\}$.

In order to characterize a TP-CP map $\Lambda_\theta$,
we formulate the notation concerning states on the tensor product system 
${\cal H}\otimes {\cal R}$,
where ${\cal R}$ is a system of the same dimensionality as
${\cal H}$ and 
is called the reference system.
Using a linear map $A$ from ${\cal R}$ to ${\cal H}$,
we define an element $|A\rangle\rangle$ of ${\cal H}\otimes {\cal R}$
as follows.
\begin{align*}
|A \rangle \rangle := \sum_{j,k}
A_{j,k} |j\rangle_{H}\otimes|k\rangle_{R},
\end{align*}
where 
$\{|j\rangle_{H}\}_{j=1, \ldots, d}$ 
and
$\{|k\rangle_{R}\}_{k=1, \ldots, d}$ 
are complete orthonormal systems (CONSs) of ${\cal H}$ and ${\cal R}$.
Hence, the relation 
\begin{align*}
B\otimes C |A \rangle \rangle 
=|B A C^T \rangle \rangle
\end{align*}
holds.
This notation is applied to the cases of 
${\cal K}\otimes {\cal H}$ and ${\cal K}\otimes {\cal R}$.

Now, we focus on the matrix
$\rho[\Lambda_\theta]:=(\Lambda_\theta\otimes \id) (|I \rangle \rangle \langle \langle I|)$,
which is called the Choi-Jamiolkowski representation matrix \cite{Choi,Jami}.
Then, 
when the input state is the maximally entangled state 
$
|\frac{1}{\sqrt{d}}I \rangle\rangle=
\sum_{j=1}^d\frac{1}{\sqrt{d}} |j\rangle\otimes |j\rangle $,
the output state 
is $\frac{1}{d}\rho[\Lambda_\theta]
=(\Lambda_\theta\otimes \id) (| \frac{1}{\sqrt{d}}I \rangle \rangle 
\langle\langle \frac{1}{\sqrt{d}}I|)$.
When the matrix $\overline{A} A^T$ is a density matrix on 
${\cal H}$,
$|A \rangle \rangle \langle \langle A |$
is a pure state on the product system ${\cal H}\otimes {\cal R}$.
Thus, the output state is given as
\begin{align*}
(\Lambda_\theta\otimes \id) (| A \rangle \rangle \langle \langle A |)
=
(I\otimes A^T)(\Lambda_\theta\otimes \id) (| I \rangle \rangle 
\langle\langle I|)(I\otimes \overline{A})
=
(I\otimes A^T)\rho[\Lambda_\theta](I\otimes \overline{A}).
\end{align*}
In the one-parameter case, 
we express the derivative $\frac{d \rho(\Lambda_\theta)}{d \theta}$ by 
$D[\Lambda_\theta]$.

When the input state is the product state 
$|v\rangle\langle v| \otimes |u\rangle \langle u|$,
the output state is
$\Lambda_\theta(|v\rangle\langle v|) \otimes |u\rangle \langle u|
=
( I  \otimes u \cdot v^{T} )\rho[\Lambda_\theta] 
( I  \otimes \overline{v} \cdot u^{\dagger} )$.
Since
\begin{align*}
& \langle \overline{v} |
\Tr_{{\cal K}}
\rho[\Lambda_\theta] 
|\overline{v} \rangle 
|u\rangle \langle u|
=
u \cdot v^{T} 
\Tr_{{\cal K}}
\rho[\Lambda_\theta] 
\overline{v} \cdot u^{\dagger} \\
=&
\Tr_{{\cal K}}( I  \otimes u \cdot v^{T} )\rho[\Lambda_\theta] 
( I  \otimes \overline{v} \cdot u^{\dagger} )
=
|u\rangle \langle u|,
\end{align*}
we have
\begin{align*}
\langle \overline{v}| 
(\Tr_{{\cal K}}\rho[\Lambda_\theta])  | \overline{v}\rangle
=1.
\end{align*}
Thus, we obtain
\begin{align}
\Tr_{{\cal K}}\rho[\Lambda_\theta]
=I.\Label{20-4}
\end{align}
Taking the derivative in (\ref{20-4}),
we obtain
\begin{align}
\Tr_{{\cal K}}
D[\Lambda_\theta]
=0.\Label{20-5}
\end{align}

Now, we back to our estimation problem.
In this problem, our choice is given by a pair of the input state 
$\rho$ and the quantum measurement $M$.
When we fix the input state, our estimation problem can be reduced to
the state estimation with the state family $\{\Lambda_\theta(\rho)|\theta
\in \Theta\}$.
In the one-parameter case,
we focus on the suprema
\begin{align*}
J^R[\Lambda_\theta]:=
\sup_{\rho} 
J^R[\Lambda_\theta,\rho], \quad
J^S[\Lambda_\theta]:=
\sup_{\rho} 
J^S[\Lambda_\theta,\rho],
\end{align*}
where
$J^S[\Lambda_\theta,\rho]$ and $J^R[\Lambda_\theta,\rho]$
are the SLD and RLD Fisher informations when 
the input state is $\rho$.
In particular,
it is important to calculate the supremum $J^S[\Lambda_\theta]$
which is smaller than $J^R[\Lambda_\theta]$.

When $n$ applications of the unknown channel $\Lambda_\theta$
are available, the input state $\rho_n$ 
and the measurement $M^n$ are given as a state on 
$({\cal H}\otimes {\cal R})^{\otimes n}$ 
and a POVM on 
$({\cal K}\otimes {\cal R})^{\otimes n}$.
For a sequence of estimators $\{ (\rho_n,M^n)\}$,
we consider the asymptotic locally unbiased condition:
\begin{align*}
\lim_{n \to \infty}
\int \hat{\theta} \Tr \Lambda_{\theta_0}(\rho_n) M^n(d \hat{\theta})=\theta_0, \quad
\lim_{n \to \infty}
\left. \frac{d}{d \theta}
\int \hat{\theta} \Tr \Lambda_{\theta}(\rho_n)  M^n(d \hat{\theta})
\right|_{\theta=\theta_0}=1
\end{align*}
for all points $\theta_0$,
and denotes the MSE of $(\rho_n,M^n)$
by $\MSE_\theta(\rho_n,M^n)$.
Assume that
$J^S[\Lambda_\theta^{\otimes n}]$
behaves as $O(n^{\alpha})$ when $n$ goes to infinity.
When 
$\{(\rho_n,M^n)\}$ satisfies the asymptotic
locally unbiased condition,
the inequality (\ref{22-1}) yields that
\begin{align*}
\limsup_{n \to \infty}
n^\alpha \MSE_\theta(\rho_n,M^n)
\ge 
\limsup_{n \to \infty} 
\frac{n^\alpha}{J^S[\Lambda_\theta^{\otimes n}]}.
\end{align*}
We define the Cram\'{e}r-Rao bound:
\begin{align*}
&\tilde{C}_\alpha [\Lambda_\theta]\\
:=&
\inf_{\{\rho_n, M^n\}} \left\{\limsup_{n \to\infty} n^\alpha \MSE_\theta(\rho_n,M^n)
\left|
\begin{array}{l}
\{(\rho_n,M^n)\}\hbox{ satisfies the asymptotic}\\
\hbox{ locally unbiased condition.}
\end{array}
\right.
\right\}.
\end{align*}
Thus, 
we obtain
\begin{align}
\tilde{C}_\alpha [\Lambda_\theta]=
\limsup_{n \to \infty} 
\frac{n^\alpha}{J^S[\Lambda_\theta^{\otimes n}]}.
\label{3-20-1}
\end{align}
Since 
\begin{align}
J^S[\Lambda_\theta^{\otimes n+m}]
\ge J^S[\Lambda_\theta^{\otimes n}]+ J^S[\Lambda_\theta^{\otimes m}], 
\Label{4-23-2}
\end{align}
the limit $\lim_{n \to \infty} \frac{J^S[\Lambda_\theta^{\otimes n}]}{n}$ exists.
 (For example, see Lemma A.1 of \cite{3}.)
Thus, $\tilde{C}_1 [\Lambda_\theta]$ can be defined by 
$\lim_{n \to \infty} \frac{n^\alpha}{J^S[\Lambda_\theta^{\otimes n}]}$.

In order to treat the above values,
we consider the following condition:
\begin{description}
\item[(C)]
The range of $\rho[\Lambda_\theta]$
contains the range of $D[\Lambda_\theta]^2$.
\end{description}
Assume that the condition (C) does not hold.
When the input state is the maximally entangled state 
$|\frac{1}{\sqrt{d}}I \rangle\rangle$,
the RLD Fisher information diverges.
So, $J^R[\Lambda_\theta]$ is infinity.

\begin{thm}\Label{th1}
When the condition (C) holds,
\begin{align*}
J^R[\Lambda_\theta]
=
\|\Tr_{{\cal K}} 
D[\Lambda_\theta]
\rho[\Lambda_\theta]^{-1}
D[\Lambda_\theta] \|.
\end{align*}
\end{thm}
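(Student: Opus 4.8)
The plan is to reduce the supremum over all input states to a largest‑eigenvalue problem on the reference system ${\cal R}$, exploiting the fact that, once the input is fixed, the RLD Fisher information of the output family is given by the closed formula (\ref{20-11-a}).

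\emph{Step 1 (reduction to pure inputs).} Because the RLD Fisher information is convex with respect to classical mixtures of the state family, mixing inputs cannot increase it, so the supremum defining $J^R[\Lambda_\theta]$ is attained at pure inputs; and since every pure state on ${\cal H}\otimes{\cal R}'$ has Schmidt rank at most $d=\dim{\cal H}$, it suffices to consider inputs $|A\rangle\rangle\langle\langle A|$ with $A$ a $d\times d$ matrix normalised by $\Tr AA^\dagger=1$. For such an input the output family and its derivative are
\begin{align*}
\sigma_\theta=(I\otimes A^T)\rho[\Lambda_\theta](I\otimes\overline{A}),\qquad
\frac{d\sigma_\theta}{d\theta}=(I\otimes A^T)D[\Lambda_\theta](I\otimes\overline{A}).
\end{align*}

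\emph{Step 2 (evaluating $J^R$ for a pure input).} Since $D[\Lambda_\theta]$ is Hermitian, $\Ran D[\Lambda_\theta]=\Ran D[\Lambda_\theta]^2$, so condition (C) is equivalent to $\Ran D[\Lambda_\theta]\subseteq\Ran\rho[\Lambda_\theta]$; this guarantees that the RLD of $\sigma_\theta$ exists and that formula (\ref{20-11-a}) applies, giving $J^R[\Lambda_\theta,|A\rangle\rangle\langle\langle A|]=\Tr(\frac{d\sigma_\theta}{d\theta})^2\sigma_\theta^{-1}$. I would first treat invertible $A$, where $\sigma_\theta^{-1}=(I\otimes\overline{A}^{-1})\rho[\Lambda_\theta]^{-1}(I\otimes(A^T)^{-1})$; substituting and cancelling the $A$--factors by cyclicity of the trace collapses the expression to
\begin{align*}
J^R[\Lambda_\theta,|A\rangle\rangle\langle\langle A|]
=\Tr D[\Lambda_\theta]\,(I\otimes\overline{A}A^T)\,D[\Lambda_\theta]\,\rho[\Lambda_\theta]^{-1},
\end{align*}
so that the value depends on $A$ only through $s:=\overline{A}A^T$.

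\emph{Step 3 (maximisation).} As $|A\rangle\rangle$ ranges over normalised pure states, $s=\overline{A}A^T$ ranges over all density matrices on ${\cal R}$. Using cyclicity together with the partial–trace identity $\Tr\big((I\otimes s)Y\big)=\Tr_{{\cal R}}\big(s\,\Tr_{{\cal K}}Y\big)$ with $Y=D[\Lambda_\theta]\rho[\Lambda_\theta]^{-1}D[\Lambda_\theta]$, the quantity becomes $\Tr_{{\cal R}}\big(s\,\Tr_{{\cal K}}(D[\Lambda_\theta]\rho[\Lambda_\theta]^{-1}D[\Lambda_\theta])\big)$. Because $D[\Lambda_\theta]\rho[\Lambda_\theta]^{-1}D[\Lambda_\theta]=(\rho[\Lambda_\theta]^{-1/2}D[\Lambda_\theta])^\dagger(\rho[\Lambda_\theta]^{-1/2}D[\Lambda_\theta])\ge 0$, its partial trace over ${\cal K}$ is positive semidefinite, and $\sup_s\Tr_{{\cal R}}(sX)$ over density matrices $s$ equals the largest eigenvalue of $X$, i.e. $\|X\|$. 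This yields the asserted formula.

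\emph{Main obstacle.} The delicate point is the support bookkeeping: I must check that condition (C) transfers to each output $\sigma_\theta$ so that (\ref{20-11-a}) is legitimately applicable, and that rank‑deficient $A$ (for which $\sigma_\theta$ loses support and the literal cancellation of $A^{-1}$ fails) cannot beat the full‑rank value. The latter I would handle by a continuity/limiting argument: the final expression is linear, hence continuous, in $s$, so the supremum over the compact convex set of density matrices is controlled by the full‑rank case and is attained at the top eigenprojection of $\Tr_{{\cal K}}(D[\Lambda_\theta]\rho[\Lambda_\theta]^{-1}D[\Lambda_\theta])$.
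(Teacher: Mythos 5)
Your Steps 1--3 reproduce the core of the paper's own argument: for invertible $A$ (and invertible $\rho[\Lambda_\theta]$) one applies (\ref{20-11-a}), cancels the $A$-factors by cyclicity, and is left with the linear functional $\Tr\, s\, X$, where $s:=\overline{A}A^T$ and $X:=\Tr_{{\cal K}}(D[\Lambda_\theta]\rho[\Lambda_\theta]^{-1}D[\Lambda_\theta])$, whose supremum over density matrices $s$ is $\|X\|$. (Your Step 1, reducing to pure inputs by convexity, is a point the paper leaves implicit.) Since full-rank $s$ can approach the top eigenprojection of $X$ and $s\mapsto\Tr\, s\,X$ is continuous, this correctly yields the lower bound $J^R[\Lambda_\theta]\ge\|X\|$.

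The genuine gap is in the upper bound, i.e.\ exactly the point you flag as the ``main obstacle''. For rank-deficient $A$ the identity $J^R[\Lambda_\theta,|A\rangle\rangle\langle\langle A|]=\Tr\, s\,X$ has not been established; the left-hand side involves the inverse of the output state on its (now smaller) support, and Fisher informations are in general not continuous when supports change. So the proposed fix --- ``the final expression is linear, hence continuous, in $s$, so the supremum over density matrices is controlled by the full-rank case'' --- is circular: continuity of $\Tr\, s\,X$ says nothing about the behaviour of $J^R$ at a rank-deficient $s$, and a priori a degenerate input could exceed $\Tr \,s\,X$. This is precisely what the second half of the paper's proof is devoted to: introducing the projections $P,P'$ onto the ranges of $I\otimes A^T$ and $I\otimes\overline{A}$, and using the operator inequality $(P'\rho[\Lambda_\theta]P')^{-1}\le\rho[\Lambda_\theta]^{-1}$ (the appendix lemma, proved via operator monotonicity of $x\mapsto -x^{-1}$) to obtain $J^R[\Lambda_\theta,|A\rangle\rangle\langle\langle A|]\le\Tr\, s\,X\le\|X\|$ in the degenerate case as well; moreover, when $\rho[\Lambda_\theta]$ itself is singular --- which condition (C) explicitly allows, e.g.\ phase damping channels --- an additional $\epsilon I$-regularization is needed both to run this computation and to verify that the support condition transfers to the output family. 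A repair in the spirit of your proposal would replace ``continuity'' by \emph{lower semicontinuity} of the RLD Fisher information (available from its variational characterization as a supremum of affine functionals $Y\mapsto 2\re\Tr(\dot\sigma Y)-\Tr Y^\dagger\sigma Y$), giving $J^R(s)\le\liminf_{s'\to s}J^R(s')=\Tr\, s\,X$ along full-rank $s'\to s$; but as written the proposal does not supply this, and that is where the real work of the theorem lies.
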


\begin{proof}
Assume that the input state is given by
$| A \rangle \rangle \langle \langle A |$
and $A$ is an invertible matrix.
Then, the range of $(I\otimes A^T)\rho[\Lambda_\theta](I\otimes \overline{A})$
contains the range of $((I\otimes A^T)D[\Lambda_\theta](I\otimes \overline{A}))^2$.

Assume that $\rho[\Lambda_\theta] $ is invertible.
Using the formula (\ref{20-11-a}), we obtain
\begin{align*}
& J^R[\Lambda_\theta,| A \rangle \rangle \langle \langle A |]=
\Tr((I\otimes A^T)D[\Lambda_\theta](I\otimes \overline{A}))^2
((I\otimes A^T)\rho[\Lambda_\theta](I\otimes \overline{A}))^{-1} \\
= &
\Tr
(I\otimes A^T)D[\Lambda_\theta](I\otimes \overline{A})
(I\otimes A^T)D[\Lambda_\theta](I\otimes \overline{A})
(I\otimes \overline{A})^{-1}
\rho[\Lambda_\theta]^{-1}
(I\otimes A^T)^{-1}\\
= &
\Tr
(I\otimes \overline{A} A^T)D[\Lambda_\theta] \rho[\Lambda_\theta]^{-1}
D[\Lambda_\theta]\\
= &
\Tr \overline{A} A^T
(\Tr_{{\cal K}}
D[\Lambda_\theta] \rho[\Lambda_\theta]^{-1}
D[\Lambda_\theta]),
\end{align*}
where
$((I\otimes A^T)\rho[\Lambda_\theta](I\otimes \overline{A}))^{-1}$
is the inverse of
$(I\otimes A^T)\rho[\Lambda_\theta](I\otimes \overline{A})$ on its range.
So, the supremum of $\Tr \overline{A} A^T
(\Tr_{{\cal K}} D[\Lambda_\theta] \rho[\Lambda_\theta]^{-1}
D[\Lambda_\theta])$ 
with the condition $\rank \overline{A} A^T= \dim {\cal H}$
equals $\|\Tr_{{\cal K}}
D[\Lambda_\theta] \rho[\Lambda_\theta]^{-1}
D[\Lambda_\theta]\|$.
Assume that $\rho[\Lambda_\theta] $ is non-invertible.
We choose an arbitrary small real number $\epsilon>0$.
Similar calculations 
and the operator monotonicity of $x \mapsto -x^{-1}$
yield that
\begin{align}
&
\Tr((I\otimes A^T)D[\Lambda_\theta](I\otimes \overline{A}))^2
((I\otimes A^T)(\rho[\Lambda_\theta]+\epsilon I)(I\otimes \overline{A}))^{-1} \nonumber \\
\le &
\Tr \overline{A} A^T
(\Tr_{{\cal K}}
D[\Lambda_\theta] (\rho[\Lambda_\theta]+\epsilon I)^{-1}
D[\Lambda_\theta]) \nonumber \\
\le &
\Tr \overline{A} A^T
(\Tr_{{\cal K}}
D[\Lambda_\theta] \rho[\Lambda_\theta]^{-1}
D[\Lambda_\theta]),\label{8-2-1}
\end{align}
which is a bounded value due to Condition (C).
Now, we focus on two matrixes on the range of 
$(I\otimes A^T)\rho[\Lambda_\theta](I\otimes \overline{A})$, 
$
(I\otimes A^T)\rho[\Lambda_\theta](I\otimes \overline{A})$
and 
$((I\otimes A^T)D[\Lambda_\theta](I\otimes \overline{A}))^2$.
Since the right hand side of (\ref{8-2-1}) is independent of $\epsilon$,
the range of 
$(I\otimes A^T)\rho[\Lambda_\theta](I\otimes \overline{A})$
contains 
that of 
$((I\otimes A^T)D[\Lambda_\theta](I\otimes \overline{A}))^2$.
So, taking the limit $\epsilon\to 0$, we obtain
\begin{align*}
& J^R[\Lambda_\theta,| A \rangle \rangle \langle \langle A |]
=
\Tr
((I\otimes A^T)D[\Lambda_\theta](I\otimes \overline{A}))^2
((I\otimes A^T)\rho[\Lambda_\theta](I\otimes \overline{A}))^{-1} \\
= &
\lim_{\epsilon \to +0} 
\Tr((I\otimes A^T)D[\Lambda_\theta](I\otimes \overline{A}))^2
((I\otimes A^T)(\rho[\Lambda_\theta]+\epsilon I)(I\otimes \overline{A}))^{-1} \\
\le &
\Tr \overline{A} A^T
(\Tr_{{\cal K}}
D[\Lambda_\theta] \rho[\Lambda_\theta]^{-1}
D[\Lambda_\theta]).
\end{align*}

The remaining task is to show the inequality
\begin{align}
J^R[\Lambda_\theta,| A \rangle \rangle \langle \langle A |]
\le
\|\Tr_{{\cal K}} D[\Lambda_\theta] \rho[\Lambda_\theta]^{-1} D[\Lambda_\theta]\|
\label{3-8}
\end{align}
for a non-invertible matrix $A$. 
Define $(I \otimes A^T)^{-1}$ and $(I\otimes \overline{A})^{-1}$
as the inverses of $I \otimes A^T$ and $I\otimes \overline{A}$
whose domains are the ranges of the matrixes $I \otimes A^T$ and $I \otimes \overline{A}$.
Letting $P$ and $P'$ be
the projections to the ranges of the matrixes $I \otimes A^T$ and $I \otimes \overline{A}$,
we have
$(I \otimes A^T)=P (I \otimes A^T) = (I \otimes A^T)P'=P(I \otimes A^T)P'$, and
$(I \otimes \overline{A})=P' (I \otimes \overline{A}) = (I \otimes \overline{A})P=P'(I \otimes \overline{A})P$.
Then,
$(I\otimes A^T)\rho[\Lambda_\theta](I\otimes \overline{A})
=P(I\otimes A^T)\rho[\Lambda_\theta](I\otimes \overline{A})P
=P(I\otimes A^T)P'\rho[\Lambda_\theta]P'(I\otimes \overline{A})P$.

In the following, we consider the case when 
$\rho[\Lambda_\theta]$ is invertible.
The matrix $(P' \rho[\Lambda_\theta] P')^{-1}$ is defined as
the inverse of $P' \rho[\Lambda_\theta] P'$ whose domain and range are the range of $P'$.
Thus,
\begin{align}
& J^R[\Lambda_\theta,| A \rangle \rangle \langle \langle A |]=
\Tr P((I\otimes A^T)D[\Lambda_\theta](I\otimes \overline{A}))^2 P
(P(I\otimes A^T)\rho[\Lambda_\theta](I\otimes \overline{A})P)^{-1} \nonumber\\
= &
\Tr
 (I\otimes A^T) D[\Lambda_\theta] P'(I\otimes \overline{A})
P
(I\otimes \overline{A})^{-1} 
(P' \rho[\Lambda_\theta] P')^{-1}
(I\otimes A^T)^{-1}P
(I\otimes A^T) P' D[\Lambda_\theta] (I\otimes \overline{A})\nonumber
\\
= &
\Tr
 (I\otimes A^T)  D[\Lambda_\theta] 
P'(P' \rho[\Lambda_\theta] P')^{-1}P'
 D[\Lambda_\theta]  (I\otimes \overline{A})\nonumber \\
= &
\Tr
(P' \rho[\Lambda_\theta] P')^{-1}
P' D[\Lambda_\theta]  (I\otimes \overline{A} A^T)  D[\Lambda_\theta] P'
\nonumber \\
= &
 \Tr
(P' \rho[\Lambda_\theta] P')^{-1}
 D[\Lambda_\theta]  (I\otimes \overline{A} A^T)  D[\Lambda_\theta] 
\nonumber \\
\le & 
\Tr
\rho[\Lambda_\theta]^{-1} 
 D[\Lambda_\theta]  (I\otimes \overline{A} A^T)  D[\Lambda_\theta] 
\label{3-12}\\
= &
\Tr \overline{A} A^T
(\Tr_{{\cal K}}
D[\Lambda_\theta] \rho[\Lambda_\theta]^{-1}
D[\Lambda_\theta])\nonumber .
\end{align}
Therefore, 
the inequality (\ref{3-8}) holds.

Next, we consider the case when 
$\rho[\Lambda_\theta]$ is non-invertible.
We choose an arbitrary small real number $\epsilon >0$.
Similar calculations 
and the operator monotonicity of $x \mapsto -x^{-1}$
yield that
\begin{align}
& 
\Tr P((I\otimes A^T)D[\Lambda_\theta](I\otimes \overline{A}))^2 P
(P(I\otimes A^T)
(\rho[\Lambda_\theta] +\epsilon I)
(I\otimes \overline{A})P)^{-1} \nonumber\\
\le &
\Tr 
\overline{A} A^T
(\Tr_{{\cal K}}
D[\Lambda_\theta] (\rho[\Lambda_\theta]+\epsilon I)^{-1}
D[\Lambda_\theta])
\nonumber\\
\le & 
\Tr 
\overline{A} A^T
(\Tr_{{\cal K}}
D[\Lambda_\theta] \rho[\Lambda_\theta]^{-1}
D[\Lambda_\theta]), 
\label{8-1-1}
\end{align}
which is a bounded value due to Condition (C).
Now, we focus on two matrixes on the range of $P$, 
$P(I\otimes A^T)\rho[\Lambda_\theta]
(I\otimes \overline{A})P$
and
$P((I\otimes A^T)D[\Lambda_\theta](I\otimes \overline{A}))^2 P$.
Since the right hand side of (\ref{8-1-1}) is independent of $\epsilon$,
the range of 
$P(I\otimes A^T)\rho[\Lambda_\theta]
(I\otimes \overline{A})P$
contains 
that of 
$P((I\otimes A^T)D[\Lambda_\theta](I\otimes \overline{A}))^2 P$.
So, taking the limit $\epsilon\to 0$, we obtain
\begin{align}
& J^R[\Lambda_\theta,| A \rangle \rangle \langle \langle A |]=
\Tr P((I\otimes A^T)D[\Lambda_\theta](I\otimes \overline{A}))^2 P
(P(I\otimes A^T)\rho[\Lambda_\theta](I\otimes \overline{A})P)^{-1} \nonumber\\
= &
\lim_{\epsilon \to +0} 
\Tr P((I\otimes A^T)D[\Lambda_\theta](I\otimes \overline{A}))^2 P
(P(I\otimes A^T)
(\rho[\Lambda_\theta] +\epsilon I)
(I\otimes \overline{A})P)^{-1} 
\nonumber  \\
\le & 
\Tr \overline{A} A^T
(\Tr_{{\cal K}}
D[\Lambda_\theta] \rho[\Lambda_\theta]^{-1}
D[\Lambda_\theta])\nonumber .
\end{align}
Therefore, 
the inequality (\ref{3-8}) holds.
\end{proof}

\begin{thm}\Label{th2}
When 
two channel families 
$\{\Lambda_{\theta}\}$
and $\{\tilde{\Lambda}_{\theta}\}$
satisfy the condition (C),
then the additivity
\begin{align*}
J^R[\Lambda_\theta\otimes \tilde{\Lambda}_{\theta}]
=
J^R[\Lambda_\theta]
+J^R[\tilde{\Lambda}_\theta]
\end{align*}
holds.
\end{thm}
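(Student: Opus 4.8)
The plan is to evaluate all three quantities through the closed formula of Theorem \ref{th1} and to reduce the additivity to the behaviour of the operator norm under a tensor sum. Write $\rho=\rho[\Lambda_\theta]$, $D=D[\Lambda_\theta]$ on ${\cal K}\otimes{\cal R}$, and $\tilde\rho=\rho[\tilde\Lambda_\theta]$, $\tilde D=D[\tilde\Lambda_\theta]$ on $\tilde{\cal K}\otimes\tilde{\cal R}$. First I would record the tensor structure of the Choi--Jamiolkowski representation: since the maximally entangled state on the joint input factorizes once each output system is grouped with its own reference, one has $\rho[\Lambda_\theta\otimes\tilde\Lambda_\theta]=\rho\otimes\tilde\rho$ (up to this reordering), and differentiating in $\theta$ yields the Leibniz rule $D[\Lambda_\theta\otimes\tilde\Lambda_\theta]=D\otimes\tilde\rho+\rho\otimes\tilde D$.

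Before applying Theorem \ref{th1} to $\Lambda_\theta\otimes\tilde\Lambda_\theta$ I would check that it too satisfies (C). Because $D$ is Hermitian, $\Ker D^2=\Ker D$, so (C) is equivalent to $\Ran D\subseteq\Ran\rho$; the two hypotheses thus read $\Ran D\subseteq\Ran\rho$ and $\Ran\tilde D\subseteq\Ran\tilde\rho$. Then $\Ran(D\otimes\tilde\rho)$ and $\Ran(\rho\otimes\tilde D)$ are both contained in $\Ran\rho\otimes\Ran\tilde\rho=\Ran(\rho\otimes\tilde\rho)$, and since the range of a sum lies in the sum of the ranges, $\Ran D[\Lambda_\theta\otimes\tilde\Lambda_\theta]\subseteq\Ran(\rho\otimes\tilde\rho)$. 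Hence (C) holds for the product and Theorem \ref{th1} is available.

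The core computation is the partial trace appearing in Theorem \ref{th1}. Using $(\rho\otimes\tilde\rho)^{-1}=\rho^{-1}\otimes\tilde\rho^{-1}$ on the relevant support and expanding $D[\Lambda_\theta\otimes\tilde\Lambda_\theta](\rho\otimes\tilde\rho)^{-1}D[\Lambda_\theta\otimes\tilde\Lambda_\theta]$ into four terms, the two diagonal terms give $D\rho^{-1}D\otimes\tilde\rho$ and $\rho\otimes\tilde D\tilde\rho^{-1}\tilde D$, while the two cross terms each collapse to $D\otimes\tilde D$ (here one uses $DP=D$ and $\tilde P\tilde D=\tilde D$ for the support projections $P,\tilde P$, which is exactly (C)). Taking $\Tr_{{\cal K}\otimes\tilde{\cal K}}$ and invoking $\Tr_{{\cal K}}\rho=I$ from (\ref{20-4}) and $\Tr_{{\cal K}}D=0$ from (\ref{20-5}) together with their tilde analogues, the cross terms vanish and the diagonal ones simplify to
\begin{align*}
\Tr_{{\cal K}\otimes\tilde{\cal K}}D[\Lambda_\theta\otimes\tilde\Lambda_\theta](\rho\otimes\tilde\rho)^{-1}D[\Lambda_\theta\otimes\tilde\Lambda_\theta]
=(\Tr_{{\cal K}}D\rho^{-1}D)\otimes I+I\otimes(\Tr_{\tilde{\cal K}}\tilde D\tilde\rho^{-1}\tilde D).
\end{align*}

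It then remains to take the operator norm. Each summand is positive semidefinite: for Hermitian $D$ and positive $\rho^{-1}$ one has $\langle v|D\rho^{-1}D|v\rangle=\langle Dv|\rho^{-1}|Dv\rangle\ge 0$, and the partial trace preserves positivity, so $Y:=\Tr_{{\cal K}}D\rho^{-1}D$ and $\tilde Y:=\Tr_{\tilde{\cal K}}\tilde D\tilde\rho^{-1}\tilde D$ have only nonnegative spectrum. The eigenvalues of $Y\otimes I+I\otimes\tilde Y$ are the pairwise sums of those of $Y$ and $\tilde Y$, so the largest is $\|Y\|+\|\tilde Y\|$, which is also the operator norm; by Theorem \ref{th1} this equals $J^R[\Lambda_\theta]+J^R[\tilde\Lambda_\theta]$, giving the claim. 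The main obstacle I anticipate is the bookkeeping when $\rho$ or $\tilde\rho$ is non-invertible, where the inverses must be read on their supports and the $\epsilon$-regularization argument of Theorem \ref{th1} is repeated; but it is precisely the positivity of $Y$ and $\tilde Y$ that makes the norm of the tensor sum split additively.
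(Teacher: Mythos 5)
Your proof is correct and follows essentially the same route as the paper: Theorem \ref{th1} applied to the product channel, the Leibniz rule for the derivative of the Choi--Jamiolkowski matrix, the identities (\ref{20-4}) and (\ref{20-5}) to eliminate the cross terms after the partial trace, and additivity of the operator norm on the tensor sum. You are in fact more careful than the paper on two points: you explicitly verify that condition (C) holds for the product channel before invoking Theorem \ref{th1} (the paper assumes this implicitly), and you justify the norm additivity via positivity of the two summands --- the paper's stated claim that $\|X\otimes I + I \otimes Y\|=\|X\|+\|Y\|$ for \emph{arbitrary} Hermitian $X,Y$ is false in general (e.g.\ $X$ with spectrum $\{1,-2\}$ and $Y$ with spectrum $\{2,-1\}$ give norm $3$, not $4$), though it is valid for the positive semidefinite matrices that actually arise here.
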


\begin{proof}
Let ${\cal K}$ and $\tilde{{\cal K}}$ be output systems of 
the channels $\Lambda_{\theta}$ and $\tilde{\Lambda}_{\theta}$.
Then, the relation (\ref{20-5}) guarantees 
\begin{align}
\Tr_{{\cal K}\otimes \tilde{{\cal K}}}
D[\Lambda_\theta] \otimes
D[\tilde{\Lambda}_\theta]
=
\Tr_{{\cal K}}
D[\Lambda_\theta] 
\Tr_{\tilde{{\cal K}}}
D[\tilde{\Lambda}_\theta]
=0.\Label{4-7-2}
\end{align}
Since $D[\Lambda_\theta\otimes \tilde{\Lambda}_{\theta}]
=
D[\Lambda_\theta]\otimes \rho[\tilde{\Lambda}_{\theta}]
+\rho[\Lambda_\theta]\otimes D[\tilde{\Lambda}_{\theta}]$,
Theorem \ref{th1} and (\ref{4-7-2}) yield 
\begin{align}
& J^R[\Lambda_\theta\otimes \tilde{\Lambda}_{\theta}]\nonumber \\
= &
\|
\Tr_{{\cal K}\otimes \tilde{{\cal K}}}
(D[\Lambda_\theta]\otimes \rho[\tilde{\Lambda}_{\theta}]
+\rho[\Lambda_\theta ]\otimes D[\tilde{\Lambda}_{\theta}])
(\rho[\Lambda_\theta ]^{-1}\otimes \rho[\tilde{\Lambda}_{\theta}]^{-1})
(D[\Lambda_\theta]\otimes \rho[\tilde{\Lambda}_{\theta}]
+\rho[\Lambda_\theta ] \otimes D[\tilde{\Lambda}_{\theta}])
\| \nonumber \\
= &
\|
\Tr_{{\cal K}\otimes \tilde{{\cal K}}}
(D[\Lambda_\theta] \rho[\Lambda_\theta ]^{-1} D[\Lambda_\theta]
\otimes \rho[\tilde{\Lambda}_{\theta}]
+
\rho[\Lambda_\theta ]\otimes
D[\tilde{\Lambda}_\theta] \rho[\tilde{\Lambda}_{\theta}]^{-1} D[\tilde{\Lambda}_\theta]
+
2D[\Lambda_\theta]
\otimes
D[\tilde{\Lambda}_\theta] )\| \nonumber \\
=& 
\|
(\Tr_{{\cal K}}
D[\Lambda_\theta]
\rho[\Lambda_\theta ]^{-1}
D[\Lambda_\theta])\otimes I
+
I \otimes
(\Tr_{\tilde{{\cal K}}} D[\tilde{\Lambda}_\theta] \rho[\tilde{\Lambda}_{\theta}]^{-1}
D[\tilde{\Lambda}_\theta]) \| \nonumber  \\
=&
\|
\Tr_{{\cal K}}
D[\Lambda_\theta]
\rho[\Lambda_\theta ]^{-1}
D[\Lambda_\theta]\|
+
\|
\Tr_{\tilde{{\cal K}}}
D[\tilde{\Lambda}_\theta]\rho[\tilde{\Lambda}_{\theta}]^{-1}D[\tilde{\Lambda}_\theta] \| \Label{4-7-1} \\
=&
J^R[\Lambda_\theta]
+J^R[\tilde{\Lambda}_\theta],\nonumber 
\end{align}
where the equation (\ref{4-7-1}) follows from the 
additivity property concerning matrix norm:
\begin{align*}
\|X\otimes I + I \otimes Y\|
=\|X\|+\|Y\|
\end{align*}
for any two Hermitian matrixes $X$ and $Y$.
\end{proof}

\begin{cor}
When 
a channel family
$\{\Lambda_{\theta}\}$
satisfies the condition (C),
then 
\begin{align*}
J^R[\Lambda_\theta^{\otimes n}]=
n
J^R[\Lambda_\theta].
\end{align*}
\end{cor}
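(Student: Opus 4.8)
The plan is to reduce the corollary to a repeated application of Theorem \ref{th2}. The one point needing care is that Theorem \ref{th2} presumes both tensor factors satisfy condition (C); so before peeling off one copy of $\Lambda_\theta$ at a time, I must verify that $\{\Lambda_\theta^{\otimes n}\}$ itself satisfies (C) for every $n$. Thus the first step is to establish that (C) is stable under the tensor product of two families that both satisfy it.

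To do this I would first restate (C) in a cleaner form. Because $D[\Lambda_\theta]$ is Hermitian, $\Ker(D[\Lambda_\theta]^2)=\Ker(D[\Lambda_\theta])$ and hence $\Ran(D[\Lambda_\theta]^2)=\Ran(D[\Lambda_\theta])$, so (C) is equivalent to the plain inclusion $\Ran(D[\Lambda_\theta])\subseteq \Ran(\rho[\Lambda_\theta])$. Now I would use the multiplicativity $\rho[\Lambda_\theta\otimes\tilde{\Lambda}_\theta]=\rho[\Lambda_\theta]\otimes\rho[\tilde{\Lambda}_\theta]$ of the Choi--Jamiolkowski representation together with the derivative identity $D[\Lambda_\theta\otimes\tilde{\Lambda}_\theta]=D[\Lambda_\theta]\otimes\rho[\tilde{\Lambda}_\theta]+\rho[\Lambda_\theta]\otimes D[\tilde{\Lambda}_\theta]$ already used in the proof of Theorem \ref{th2}. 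If (C) holds for each factor, then each of the two summands has range inside $\Ran(\rho[\Lambda_\theta])\otimes\Ran(\rho[\tilde{\Lambda}_\theta])=\Ran(\rho[\Lambda_\theta\otimes\tilde{\Lambda}_\theta])$, so their sum $D[\Lambda_\theta\otimes\tilde{\Lambda}_\theta]$ does as well; hence the product family again satisfies (C).

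With this closure property in hand I would finish by induction on $n$. The case $n=1$ is immediate. Assuming both the equality $J^R[\Lambda_\theta^{\otimes n}]=nJ^R[\Lambda_\theta]$ and the validity of (C) for $\Lambda_\theta^{\otimes n}$, write $\Lambda_\theta^{\otimes(n+1)}=\Lambda_\theta^{\otimes n}\otimes\Lambda_\theta$. Both factors satisfy (C) by the closure step, so Theorem \ref{th2} applies and gives $J^R[\Lambda_\theta^{\otimes(n+1)}]=J^R[\Lambda_\theta^{\otimes n}]+J^R[\Lambda_\theta]=(n+1)J^R[\Lambda_\theta]$, which closes the induction.

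The main obstacle is the closure step, precisely because Theorem \ref{th2} is stated only for families already known to satisfy (C); the induction itself is routine. The reformulation of (C) as a range inclusion, which rests on the Hermiticity of $D[\Lambda_\theta]$, is the device that makes the closure transparent.
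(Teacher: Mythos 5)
Your proof is correct, and it follows the only route the paper has in mind: the corollary appears there with no proof at all, being read off as an immediate induction on Theorem \ref{th2}. What you add beyond the paper is the closure step — that condition (C) is stable under tensor products — via the reformulation of (C) as the range inclusion $\Ran(D[\Lambda_\theta])\subseteq\Ran(\rho[\Lambda_\theta])$ (using Hermiticity of $D[\Lambda_\theta]$, so that $\Ran(D[\Lambda_\theta]^2)=\Ran(D[\Lambda_\theta])$) together with the identities $\rho[\Lambda_\theta\otimes\tilde{\Lambda}_\theta]=\rho[\Lambda_\theta]\otimes\rho[\tilde{\Lambda}_\theta]$ and $D[\Lambda_\theta\otimes\tilde{\Lambda}_\theta]=D[\Lambda_\theta]\otimes\rho[\tilde{\Lambda}_\theta]+\rho[\Lambda_\theta]\otimes D[\tilde{\Lambda}_\theta]$. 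This is not a detour but a genuine gap-fill: the paper tacitly assumes this closure, not only in the corollary but already inside the proof of Theorem \ref{th2} itself, which invokes Theorem \ref{th1} for the product family $\{\Lambda_\theta\otimes\tilde{\Lambda}_\theta\}$ and hence needs the product to satisfy (C). So your write-up proves exactly what the paper uses, with the one missing verification supplied; the induction itself coincides with the paper's implicit argument.
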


Since $n J^S[\Lambda_\theta] \le J^S[\Lambda_\theta^{\otimes n}]
\le J^R[\Lambda_\theta^{\otimes n}]$,
$J^S[\Lambda_\theta^{\otimes n}]$ increases in order $n$ under the assumption of Theorem \ref{th1}, i.e., 
$J^S[\Lambda_\theta^{\otimes n}]=O(n)$.
When the rank of $\rho[\Lambda_\theta]$ is the maximum, i.e., $dd'$,
this condition holds and 
$J^S[\Lambda_\theta^{\otimes n}]=O(n)$.
However, there is an example that 
does not satisfy the above condition but satisfies
the condition (C) as follows.
So, the condition (C) is weaker than the condition that
$\rho[\Lambda_\theta]$ has the maximum rank.

A channel $\Lambda$ is called a phase damping channel
when 
the output system ${\cal K}$ equals the input system ${\cal H}$ and
there exist complex numbers $d_{k,l}$ such that
\begin{align*}
\Lambda_{\{d_{k,l} \}}(\rho)=\sum_{k,l} d_{k,l}\rho_{k,l}|k\rangle \langle l| ,
\end{align*}
where $\rho=\sum_{k,l} \rho_{k,l}|k\rangle \langle l|$.
In this case, the state $\rho[\Lambda]$ is written as the following form
\begin{align*}
\rho[\Lambda_{\{d_{k,l} \}}]=
\sum_{k,l} d_{k,l}|k\rangle|k\rangle \langle l|\langle l| .
\end{align*}
That is, the range of $\rho[\Lambda]$ is included by 
the space spanned by
$\{|k\rangle_{K}|k\rangle_{R}\}$.
When a channel family $\{\Lambda_\theta\}$ 
is given as a one-parameter subfamily of
$\{ \Lambda_{\{d_{k,l} \}} | d_{k,l}\hbox{ is strictly positive.}  \}$,
the condition (C) holds.
Therefore, there exists a channel family that satisfies the condition (C) but
consists of non-full-rank channels.

Further, we have the following observation.
\begin{cor}
Assume that the condition (C) holds
and there exists a normalized vector 
$u$ in the input system ${\cal H}$ such that
$\|\Tr_{{\cal K}}
D[\Lambda_\theta] \rho[\Lambda_\theta]^{-1}
D[\Lambda_\theta]\|
=\langle u| \Tr_{{\cal K}}
D[\Lambda_\theta] \rho[\Lambda_\theta]^{-1}
D[\Lambda_\theta]|u \rangle$,
$[\langle u| \rho[\Lambda_\theta]|u \rangle ,
\langle u| D[\Lambda_\theta]|u \rangle]=0$, and
$[I_{{\cal K}}\otimes |u \rangle\langle u|,
\rho[\Lambda_\theta]]=0$.
Then,
$J^S[\Lambda_\theta] = J^R[\Lambda_\theta]$,
and this bound can be attained 
by the input pure state $|u \rangle\langle u|$ on ${\cal H}$.
That is, it can be attained without use of the reference system.
\end{cor}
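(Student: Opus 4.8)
The plan is to establish the two inequalities $J^S[\Lambda_\theta]\le J^R[\Lambda_\theta]$ and $J^S[\Lambda_\theta]\ge J^R[\Lambda_\theta]$, the first being automatic and the second being witnessed by the single pure input $|u\rangle\langle u|$. For the upper bound, observe that $J^S[\Lambda_\theta,\rho]\le J^R[\Lambda_\theta,\rho]$ holds for every input $\rho$ (this is the pointwise inequality $J^S_\theta\le J^R_\theta$ recalled at the end of Section \ref{s2}), so passing to suprema over $\rho$ gives $J^S[\Lambda_\theta]\le J^R[\Lambda_\theta]$, and by Theorem \ref{th1} the right-hand side equals $\|\Tr_{{\cal K}}D[\Lambda_\theta]\rho[\Lambda_\theta]^{-1}D[\Lambda_\theta]\|$. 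It therefore suffices to produce one input whose \emph{SLD} Fisher information already attains this value; I claim the reference-free pure state $|u\rangle\langle u|$ on ${\cal H}$ does. Once $J^S[\Lambda_\theta,|u\rangle\langle u|]=\|\Tr_{{\cal K}}D[\Lambda_\theta]\rho[\Lambda_\theta]^{-1}D[\Lambda_\theta]\|$ is shown, the chain $J^R[\Lambda_\theta]\ge J^S[\Lambda_\theta]\ge J^S[\Lambda_\theta,|u\rangle\langle u|]=J^R[\Lambda_\theta]$ collapses to equality throughout, and the attaining input manifestly uses no reference system.

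Next I would identify the output family generated by this input. Sandwiching the reference slot of the Choi--Jamiolkowski matrix with $u$ gives the output state $\sigma_\theta:=\langle u|\rho[\Lambda_\theta]|u\rangle$ on ${\cal K}$, with derivative $\dot\sigma_\theta=\langle u|D[\Lambda_\theta]|u\rangle$. The second hypothesis, $[\langle u|\rho[\Lambda_\theta]|u\rangle,\langle u|D[\Lambda_\theta]|u\rangle]=0$, says exactly that $\sigma_\theta$ and $\dot\sigma_\theta$ commute. For a commuting one-parameter family the RLD $\sigma_\theta^{-1}\dot\sigma_\theta$ is self-adjoint and coincides with the SLD, so the two Fisher informations collapse to the common ``classical'' value $J^S[\Lambda_\theta,|u\rangle\langle u|]=\Tr_{{\cal K}}[\dot\sigma_\theta\,\sigma_\theta^{-1}\,\dot\sigma_\theta]$; here one also checks that the support condition needed for this formula is inherited from Condition (C).

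The remaining and most delicate step is to rewrite $\Tr_{{\cal K}}[\dot\sigma_\theta\,\sigma_\theta^{-1}\,\dot\sigma_\theta]$ as $\langle u|\Tr_{{\cal K}}(D[\Lambda_\theta]\rho[\Lambda_\theta]^{-1}D[\Lambda_\theta])|u\rangle$, after which the first hypothesis identifies it with $\|\Tr_{{\cal K}}(D\rho^{-1}D)\|=J^R[\Lambda_\theta]$. Writing $\Pi_u:=I_{{\cal K}}\otimes|u\rangle\langle u|$, the third hypothesis $[\Pi_u,\rho[\Lambda_\theta]]=0$ block-diagonalizes $\rho[\Lambda_\theta]$, and hence $\rho[\Lambda_\theta]^{-1}$, across the reference splitting $\mathrm{span}(u)\oplus u^{\perp}$; this yields the reduction of the inverse $\sigma_\theta^{-1}=(\langle u|\rho|u\rangle)^{-1}=\langle u|\rho^{-1}|u\rangle$ and kills the $u$-to-$u^{\perp}$ off-diagonal blocks of $\rho^{-1}$. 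The genuine obstacle, which I expect to be the crux of the whole argument, is that this alone does not finish: expanding $\langle u|D\rho^{-1}D|u\rangle$ leaves a residual $u^{\perp}$-sector contribution $\sum_{v,w\perp u}\langle u|D|v\rangle\langle v|\rho^{-1}|w\rangle\langle w|D|u\rangle$ which must be shown to vanish, i.e. one needs $\langle w|D[\Lambda_\theta]|u\rangle=0$ for $w\perp u$, equivalently $[\Pi_u,D[\Lambda_\theta]]=0$. The way to secure this is to use that the commutation $[\Pi_u,\rho[\Lambda_\theta]]=0$ holds along the family and hence passes to the derivative $D[\Lambda_\theta]=\tfrac{d}{d\theta}\rho[\Lambda_\theta]$, so that $D[\Lambda_\theta]$ also respects the $\Pi_u$-decomposition and the residual cross terms disappear. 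With that reduction in hand one obtains $\Tr_{{\cal K}}[\dot\sigma_\theta\,\sigma_\theta^{-1}\,\dot\sigma_\theta]=\langle u|\Tr_{{\cal K}}(D\rho^{-1}D)|u\rangle=\|\Tr_{{\cal K}}(D\rho^{-1}D)\|$, completing the squeeze and showing that no reference entanglement can improve on $|u\rangle\langle u|$.
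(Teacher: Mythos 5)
Your argument is correct in outline and matches the paper's strategy: hypothesis 2 gives $J^S[\Lambda_\theta,|u\rangle\langle u|]=J^R[\Lambda_\theta,|u\rangle\langle u|]$, hypothesis 3 is used to identify this common value with $\langle u|\Tr_{{\cal K}}D[\Lambda_\theta]\rho[\Lambda_\theta]^{-1}D[\Lambda_\theta]|u\rangle$, and hypothesis 1 together with Theorem \ref{th1} closes the squeeze. The difference lies in how the middle step is executed, and your execution is the more careful one. The paper pads the input with a dummy reference vector $v$, sets $A=u\cdot v^T$, and asserts that $[I_{{\cal K}}\otimes|u\rangle\langle u|,\rho[\Lambda_\theta]]=0$ alone forces equality in (\ref{3-12}); you instead expand $\langle u|D\rho^{-1}D|u\rangle$ in the blocks induced by $\Pi_u:=I_{{\cal K}}\otimes|u\rangle\langle u|$ and observe, correctly, that block-diagonality of $\rho[\Lambda_\theta]$ is not sufficient: there remains the nonnegative residual
\begin{align*}
\langle u|\Tr_{{\cal K}}\bigl(D[\Lambda_\theta](I-\Pi_u)\rho[\Lambda_\theta]^{-1}(I-\Pi_u)D[\Lambda_\theta]\bigr)|u\rangle ,
\end{align*}
which (under condition (C)) vanishes if and only if $(I-\Pi_u)D[\Lambda_\theta]\Pi_u=0$. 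Exactly the same residual is what separates the two sides of (\ref{3-12}) when $A=u\cdot v^T$, so the paper's claimed equality also requires $[\Pi_u,D[\Lambda_\theta]]=0$ and not only $[\Pi_u,\rho[\Lambda_\theta]]=0$; in other words, you have located a step that the paper's own proof passes over without justification.

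The one point you must make explicit is the justification of $[\Pi_u,D[\Lambda_\theta]]=0$. Differentiating the third hypothesis is legitimate only if $[\Pi_u,\rho[\Lambda_{\theta'}]]=0$ holds for all $\theta'$ in a neighborhood of $\theta$ with the same $u$; if the three hypotheses are assumed only at the single point $\theta$, no derivative information is available, and the conclusion in fact fails. A counterexample: on ${\cal K}\otimes{\cal R}=\C^2\otimes\C^2$ take $\rho[\Lambda_\theta]=\frac{1}{2}I_4+\theta D$ with $D=\sigma_z\otimes|0\rangle\langle 0|+\sigma_x\otimes(|0\rangle\langle 1|+|1\rangle\langle 0|)$ and $u=|0\rangle$ (a legitimate channel family near $\theta=0$, since $\Tr_{{\cal K}}D=0$ and $\rho[\Lambda_\theta]>0$ for small $\theta$). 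At $\theta=0$ condition (C) and all three hypotheses hold, because $\langle 0|\rho[\Lambda_0]|0\rangle=\frac{1}{2}I$ commutes with $\langle 0|D|0\rangle=\sigma_z$ and $\Tr_{{\cal K}}D\rho[\Lambda_0]^{-1}D=8|0\rangle\langle 0|+4|1\rangle\langle 1|$ has $|0\rangle$ as top eigenvector; yet $J^R[\Lambda_0,|0\rangle\langle 0|]=4$ while $\|\Tr_{{\cal K}}D\rho[\Lambda_0]^{-1}D\|=8$, so the bound is not attained by $|u\rangle\langle u|$. Hence the corollary is true only under the reading that the commutation $[\Pi_u,\rho[\Lambda_\theta]]=0$ holds identically in $\theta$, which is precisely the reading your differentiation step relies on. State that reading explicitly; with it, your proof is complete, and it in fact repairs the gap in the paper's own argument.
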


\begin{proof}
Since
$[\langle u| \rho[\Lambda_\theta]|u \rangle ,
\langle u| D[\Lambda_\theta]|u \rangle]=0$, 
$J^S[\Lambda_\theta,|u \rangle\langle u|] = J^R[\Lambda_\theta,|u \rangle\langle u|]$.
So, it is sufficient to show that $
J^R[\Lambda_\theta,|u \rangle\langle u|]
=\langle u| \Tr_{{\cal K}}
D[\Lambda_\theta] \rho[\Lambda_\theta]^{-1}
D[\Lambda_\theta]|u \rangle$.

We consider the case when the input state is 
$|u \rangle\langle u|\otimes |v \rangle\langle v|$.
Remember that 
$J^R[\Lambda_\theta,|u \rangle\langle u|\otimes |v \rangle\langle v|]
=
J^R[\Lambda_\theta,|u \rangle\langle u|]$.
Since 
$[I_{{\cal K}}\otimes |u \rangle\langle u|,
\rho[\Lambda_\theta]]=0$,
the equality in (\ref{3-12}) holds for $A=u \cdot v^T$.
Then,
\begin{align*}
& J^R[\Lambda_\theta,|u \rangle\langle u|\otimes |v \rangle\langle v|]
=
\Tr \overline{(u \cdot v^T)} (u \cdot v^T)^{T} 
\Tr_{{\cal K}}
D[\Lambda_\theta] \rho[\Lambda_\theta]^{-1}D[\Lambda_\theta] \\
=& \langle u| \Tr_{{\cal K}}
D[\Lambda_\theta] \rho[\Lambda_\theta]^{-1}D[\Lambda_\theta]|u \rangle.
\end{align*}

\end{proof}

\section{Examples}\Label{s31}
We consider the case where the condition (C) does not hold.
As the simplest example, we consider the one-parameter unitary case, i.e., 
the case when 
$\Lambda_\theta(\rho)=e^{i \theta H} \rho e^{-i\theta H}$ with 
an Hermitian matrix $H$.
Using (\ref{21-1}),
we obtain
\begin{align*}
J^S[\Lambda_\theta]
=(\lambda_{\max}(H)-\lambda_{\min}(H))^2,
\end{align*}
where 
$\lambda_{\max}(H)$ and $\lambda_{\min}(H)$
are the maximum and minimum of eigenvalues of $H$.
So, we obtain 
\begin{align*}
J^S[\Lambda_\theta^{\otimes n}]
=n^2 (\lambda_{\max}(H)-\lambda_{\min}(H))^2.
\end{align*}
In particular,
in the two-dimensional case,
when
$H=\left(
\begin{array}{cc}
\frac{1}{2} & 0 \\
0 & -\frac{1}{2} 
\end{array}
\right)$,
the optimal input is 
$\frac{1}{\sqrt{2}}(|0\rangle ^{\otimes n}+ |1\rangle^{\otimes n})$,
which is called the noon state.
This unitary estimation is called phase estimation
and this estimation with the noon state
was experimentally realized with $n=4$\cite{take1,take2} and $n=10$\cite{Mag}.

Next, we consider the $d$-dimensional system $\complex^d$ spanned by $\{|j\rangle\}_{j=0}^{d-1}$
and the unitary matrix $X$ defined as $X|j\rangle:=|j+1\rangle$ mod $d$.
Using a distribution $\{p_{j}\}_{j=0}^{d-1}$
and a real diagonal matrix $H$ with diagonal elements $\{h_{j}\}_{j=0}^{d-1}$,
we define the TP-CP map $\Lambda_\theta$ by
\begin{align*}
\Lambda_\theta(\rho):=   
\sum_{j=0}^{d-1} p_j X^j e^{i \theta H}\rho e^{- i \theta H} X^{-j}.
\end{align*}
This TP-CP map can be regarded as 
the stochasitc application of the unitary $X^j$ after the application of the unitary $e^{i \theta H}$.

Let $h_a$ and $h_b$ be the maximum and the minimum eigenvalues of $H$.
Using two-dimensional reference system spanned by $|0\rangle_R$ and $|1\rangle_R$,
we choose the following input state:
\begin{align*}
|\Phi_n\rangle:=
\frac{1}{\sqrt{2}}
(
|a\rangle^{\otimes n}|0\rangle_R 
+  |b\rangle^{\otimes n}|1\rangle_R).
\end{align*}
In this case, 
as the first step, we apply the following measurement 
$\{M_{\vec{j}}\}$:
\begin{align*}
M_{\vec{j}}:=& 
|\vec{j}+(a,\ldots, a)\rangle \langle  \vec{j}+(a,\ldots, a)|
\otimes |0\rangle_R ~_R \langle 0| \\
&+
|\vec{j}+(b,\ldots, b)\rangle \langle  \vec{j}+(b,\ldots, b)|
\otimes |1\rangle_R ~_R \langle 1|.
\end{align*}
When the outcome of this measurement is $\vec{j}$,
the resulting state is the pure state
\begin{align}
\frac{1}{\sqrt{2}}
(
e^{i n h_a }|\vec{j}+(a,\ldots, a)\rangle |0\rangle_R 
+ e^{i n h_b } |\vec{j}+(b,\ldots, b)\rangle |1\rangle_R).
\end{align}
The SLD Fisher information of the above family is
$n^2(h_a-h_b)^2$.
Therefore, since
the maximum SLD Fisher information behaves as $O(n^2)$ at most,
$J^S[\Lambda_0^{\otimes n}]$ behaves as $O(n^2)$.

\section{Local asymptotic mini-max bound}\Label{s4}
In this section, we consider the relation between the discussion in the previous section and estimating protocols in a different viewpoint.
Consider the phase estimation with inputing 
the noon state 
$\frac{1}{\sqrt{2}}(|0\rangle ^{\otimes n}+ |1\rangle^{\otimes n})$.
Then, the output state is 
$\frac{1}{\sqrt{2}}(e^{in \theta/2}|0\rangle ^{\otimes n}
+ e^{-in \theta/2}|1\rangle^{\otimes n})$.
In this case, we cannot distinguish the parameters $\theta$ and $\theta + \frac{2\pi}{n}$.
For example, when we apply measurement 
$\{\frac{1}{\sqrt{2}}(|0\rangle ^{\otimes n}\pm |1\rangle^{\otimes n})\}$,
the probability with the outcome $\frac{1}{\sqrt{2}}(|0\rangle ^{\otimes n}+ |1\rangle^{\otimes n})$
equals $\cos^2 n\theta/2 $, as is shown in Fig \ref{f3},
and the Fisher information equals $n^2$.
Even if the parameter $\theta$ is assumed to belong to $(0,\pi/n]$,
we cannot distinguish the two parameters $\theta=\pi/3n$ and $\theta= 2\pi/3n$ 
with so high probability
because we have only two outcomes.
In order to distinguish two parameters $\theta=\pi/3n$ and $\theta= 2\pi/3n$ in this measurement,
we need to repeat this measurement with several times, e.g., $k$.
Since the number of application of the unknown unitary is $N:=kn$,
the error behaves as $\frac{1}{k}\frac{1}{N}$,
which is different from $O(\frac{1}{N^2})$.
So, we cannot conclude that the above method
attains the order $O(\frac{1}{N^2})$ concerning MSE.
Therefore, we need to discuss what a bound can be attained globally,
more carefully.

\begin{figure}[htbp]
\begin{center}
\scalebox{1.0}{\includegraphics[scale=1.4]{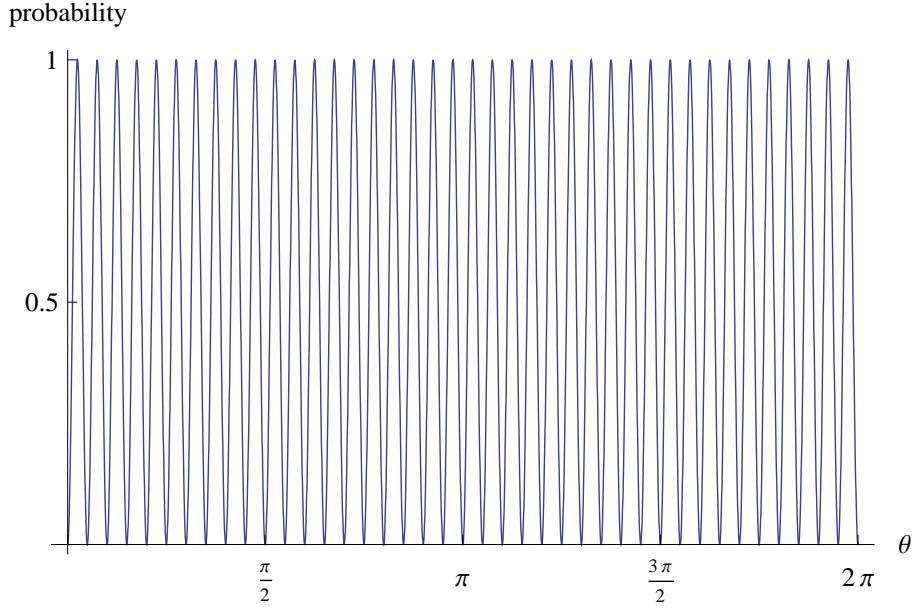}}
\end{center}
\caption{Phase estimation with noon state $n=20$}
\Label{f3}
\end{figure}%

For this purpose,
we focus on an $\epsilon$-neighborhood $U_{\theta,\epsilon}$ of $\theta$
and define 
the local asymptotic mini-max risk\cite{Hajek}:
\begin{align*}
C_\alpha [\Lambda_{\theta_0},\{(\rho_n,M^n)\}]
& :=
\lim_{\epsilon \to 0}
\limsup_{n \to\infty} 
\sup_{\theta \in U_{\theta_0,\epsilon}}
n^\alpha 
\MSE_\theta(\rho_n,M^n)
\end{align*}
and the {\it local asymptotic mini-max} bound:
\begin{align*}
C_\alpha [\Lambda_{\theta_0}]
& :=
\inf_{\{(\rho_n,M^n)\}} 
C_\alpha [\Lambda_{\theta_0},\{(\rho_n,M^n)\}].
\end{align*}

Concerning the local asymptotic mini-max bound,
we have the following two propositions.
\begin{prop}
When $\tilde{C}_\alpha [\Lambda_{\theta_0}]$ is continuous
and the convergence (\ref{3-20-1}) is compactly uniform,
\begin{align}
C_\alpha [\Lambda_{\theta_0}]
\ge 
\tilde{C}_\alpha [\Lambda_{\theta_0}].
\Label{22-9}
\end{align}
\end{prop}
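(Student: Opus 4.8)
The plan is to show that \emph{every} sequence $\{(\rho_n,M^n)\}$ --- with no unbiasedness assumed --- has local asymptotic mini-max risk at least $\tilde{C}_\alpha[\Lambda_{\theta_0}]$, and then take the infimum over sequences. The starting point is the biased form of the SLD Cram\'{e}r-Rao inequality. Writing $m_n(\theta):=\int \hat\theta \Tr\Lambda_\theta(\rho_n)M^n(d\hat\theta)$ for the mean and $b_n(\theta):=m_n(\theta)-\theta$ for the bias, and decomposing the MSE into variance plus bias squared, the Schwarz-inequality argument underlying (\ref{12-1-1}), applied to the estimand $m_n$ whose derivative is $1+b_n'(\theta)$, gives
\begin{align*}
\MSE_\theta(\rho_n,M^n)\ge b_n(\theta)^2 + \frac{(1+b_n'(\theta))^2}{J^S[\Lambda_\theta^{\otimes n}]},
\end{align*}
where I have used $J^S[\Lambda_\theta^{\otimes n},\rho_n]\le J^S[\Lambda_\theta^{\otimes n}]$.

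Next I exploit the supremum over the neighborhood to convert ``no unbiasedness'' into ``approximately locally unbiased at some point''. Fix $\epsilon>0$ and set $g(\epsilon):=\limsup_{n}\sup_{\theta\in U_{\theta_0,\epsilon}} n^\alpha \MSE_\theta(\rho_n,M^n)$; I may assume $g(\epsilon)<\infty$. Then for all large $n$ one has $\sup_{\theta\in U_{\theta_0,\epsilon}} n^\alpha\MSE_\theta\le c$ for a constant $c$, so in particular $|b_n(\theta)|\le\sqrt{c}\,n^{-\alpha/2}$ throughout $U_{\theta_0,\epsilon}$. Hence $|b_n(\theta_0+\epsilon)-b_n(\theta_0-\epsilon)|\le 2\sqrt{c}\,n^{-\alpha/2}$, and by the mean value theorem there is a point $\theta_n^*\in U_{\theta_0,\epsilon}$ with $|b_n'(\theta_n^*)|\le \sqrt{c}/(\epsilon\, n^{\alpha/2})\to 0$. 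Evaluating the displayed inequality at $\theta_n^*$ yields
\begin{align*}
\sup_{\theta\in U_{\theta_0,\epsilon}} n^\alpha\MSE_\theta(\rho_n,M^n)\ge (1+b_n'(\theta_n^*))^2 \inf_{\theta\in \ol{U_{\theta_0,\epsilon}}}\frac{n^\alpha}{J^S[\Lambda_\theta^{\otimes n}]}.
\end{align*}
Since $(1+b_n'(\theta_n^*))^2\to 1$ and, by the assumed compactly uniform convergence in (\ref{3-20-1}), $\inf_{\theta\in\ol{U_{\theta_0,\epsilon}}} n^\alpha/J^S[\Lambda_\theta^{\otimes n}]\to \inf_{\theta\in\ol{U_{\theta_0,\epsilon}}}\tilde C_\alpha[\Lambda_\theta]$, taking $\limsup_n$ gives $g(\epsilon)\ge \inf_{\theta\in\ol{U_{\theta_0,\epsilon}}}\tilde C_\alpha[\Lambda_\theta]$. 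Letting $\epsilon\to 0$ and invoking continuity of $\tilde C_\alpha[\Lambda_\theta]$ at $\theta_0$ gives $C_\alpha[\Lambda_{\theta_0},\{(\rho_n,M^n)\}]\ge \tilde C_\alpha[\Lambda_{\theta_0}]$; an infimum over $\{(\rho_n,M^n)\}$ then establishes (\ref{22-9}).

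The step I expect to be the crux is the middle one. The Cram\'{e}r-Rao bound effectively ``wants'' a locally unbiased estimator, which a given sequence need not be; it is precisely the mini-max structure --- the supremum over a shrinking but positive-width window --- that forces a near-unbiased point $\theta_n^*$ to exist, because a uniformly small MSE pins the bias down to $O(n^{-\alpha/2})$ across the whole window and hence flattens its average slope. The two hypotheses enter exactly at the boundaries of the argument: compact uniform convergence is needed so that the Fisher information at the \emph{moving} point $\theta_n^*$ still controls $\tilde C_\alpha$, and continuity of $\tilde C_\alpha$ is what lets the window collapse to $\theta_0$ without loss. A minor regularity point to verify is differentiability of $m_n$, and hence of $b_n$; this follows from smoothness of the family $\{\Lambda_\theta\}$ and is already implicit in the asymptotic locally unbiased condition used earlier.
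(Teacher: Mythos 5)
Your proposal is correct and follows essentially the same route as the paper's proof: the same bias--variance decomposition combined with the classical Cram\'{e}r--Rao bound (the paper's inequality (\ref{22-7}), with the classical Fisher information dominated by $J^S[\Lambda_\theta^{\otimes n}]$), the same extraction of a point in the window where the derivative of the estimator's mean tends to $1$, and the same use of compactly uniform convergence and continuity to collapse the window onto $\theta_0$. The only presentational difference is that you obtain the near-unbiased point $\theta_n^*$ directly via the mean value theorem (uniformly small MSE pins the bias to $O(n^{-\alpha/2})$, hence its average slope vanishes), whereas the paper establishes the same fact (its condition (\ref{3-15-2})) by reduction to absurdity; your version is, if anything, slightly more quantitative.
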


\begin{proof}
For any $\delta>0$, we choose 
an integer $N$ and an $\epsilon$-neighborhood $U_{\theta_0,\epsilon}$
satisfying 
\begin{align}
C_\alpha [\Lambda_{\theta_0}] +\delta 
\ge
n^\alpha \MSE_\theta(\rho_n,M^n) ,\quad
\forall n \ge N, \forall \theta \in U_{\theta_0,\epsilon}.
\label{3-15-1}
\end{align}
We introduce two quantities
\begin{align*}
\eta_n(\theta)
& :=\int \hat{\theta} \Tr \Lambda_{\theta}(\rho_n) M^n(d \hat{\theta})\\
v_n(\theta)
& :=\int (\hat{\theta} - \eta_n(\theta))^2
\Tr \Lambda_{\theta}(\rho_n) M^n(d \hat{\theta}).
\end{align*}
Then, we obtain
\begin{align*}
\MSE_\theta(\rho_n,M^n)
=v_n(\theta)+ (\eta_n(\theta)-\theta)^2.
\end{align*}
Let $J_{\theta,n}$ be the Fisher information 
of the distribution family 
$\{\Tr \Lambda_{\theta}(\rho_n) M^n(d \hat{\theta}) |\theta \in \Theta\}$.
This quantity is smaller than $J^S[\Lambda_\theta^{\otimes n}]$.
Deforming the classical Cram\'{e}r-Rao inequality,
we obtain 
\begin{align*}
v_n(\theta)\ge 
\frac{(\frac{d \eta_n(\theta)}{d \theta})^2}{J_{\theta,n}}.
\end{align*}
Thus, we obtain
\begin{align}
\MSE_\theta(\rho_n,M^n)
\ge 
\frac{(\frac{d \eta_n(\theta)}{d \theta})^2}{J_{\theta,n}}
+ (\eta_n(\theta)-\theta)^2.\Label{22-7}
\end{align}

As is shown later, for any $\delta>0$,
there exists a sufficiently large integer $N$ satisfying the following.
For any $n \ge N$, there 
exists $\theta_n \in U_{\theta_0,\epsilon}$ such that
\begin{align}
\frac{d \eta_n(\theta_n)}{d \theta}
\ge 1-\delta.\Label{3-15-2}
\end{align}
Using (\ref{22-7}), we obtain
\begin{align}
\MSE_{\theta_n}(\rho_n,M^n) 
\ge 
\frac{ (\frac{d \eta_n(\theta)}{d \theta})^2 }{J_{\theta_n,n}}.
\Label{4-30-1}
\end{align}
Take the limit $n \to \infty$. 
Then, the continuity of $\tilde{C}_\alpha [\Lambda_{\theta_0}]$,
the compact uniformity of the convergence (\ref{3-20-1}),
(\ref{3-15-1}), and (\ref{4-30-1})
imply that
\begin{align*}
C_\alpha [\Lambda_{\theta_0}]+\delta
\ge 
(1-\delta)^2
\tilde{C}_\alpha [\Lambda_{\theta_0}].
\end{align*}
Taking the limit $\delta\to 0$, we obtain (\ref{22-9}).

Finally, we show the existence of the integer $N$ satisfying 
the above condition given by (\ref{3-15-2}) by using reduction to absurdity.
Assume that for any $\delta>0$,
there exists a subsequence $n_k$ such that
$\frac{d \eta_{n_k}(\theta)}{d \theta}< 1-\delta$
for any $\theta \in U_{\theta_0,\epsilon}$.
Thus,
\begin{align*}
& [\eta_{n_k}(\theta+\epsilon/2)-(\theta+\epsilon/2)]
-
[\eta_{n_k}(\theta-\epsilon/2)-(\theta-\epsilon/2)] \\
= &
\eta_{n_k}(\theta+\epsilon/2)-\eta_{n_k}(\theta-\epsilon/2)
-\epsilon
<- \epsilon \delta.
\end{align*}
Then,
\begin{align*}
\max \{|\eta_{n_k}(\theta+\epsilon/2)-(\theta+\epsilon/2)|,
|\eta_{n_k}(\theta-\epsilon/2)-(\theta-\epsilon/2)|\}
>
\frac{\epsilon \delta }{2}.
\end{align*}
That is,
\begin{align*}
\max \{(\eta_{n_k}(\theta+\epsilon/2)-(\theta+\epsilon/2))^2,
(\eta_{n_k}(\theta-\epsilon/2)-(\theta-\epsilon/2))^2\}
>
\frac{\epsilon^2\delta^2}{4}.
\end{align*}
Using (\ref{22-7}), we obtain
\begin{align*}
\max \{
\MSE_{\theta+\epsilon/2}(\rho_{n_k},M^{n_k}),
\MSE_{\theta-\epsilon/2}(\rho_{n_k},M^{n_k}) \}
\ge 
\frac{\epsilon^2\delta^2}{4}.
\end{align*}
Since $\MSE_{\theta+\epsilon/2}(\rho_n,M^n)$
behaves as $O(\frac{1}{n^{\alpha}})$,
we obtain contradiction.
\end{proof}

\begin{prop}
Assume that $E:=\sup_{\theta\in \Theta}|\theta|<\infty $.
When the order parameter $\alpha$ equals $1$
and $\tilde{C}_\alpha [\Lambda_{\theta_0}]$
is continuous,
\begin{align*}
C_1 [\Lambda_{\theta_0}]
= \tilde{C}_1 [\Lambda_{\theta_0}]
=\lim_{n \to \infty} 
\frac{n}{J^S[\Lambda_\theta^{\otimes n}]}.
\end{align*}
\end{prop}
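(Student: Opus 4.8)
The plan is to prove the chain of equalities in two stages. The right-hand equality $\tilde{C}_1[\Lambda_{\theta_0}]=\lim_{n\to\infty} n/J^S[\Lambda_\theta^{\otimes n}]$ is already available: by (\ref{3-20-1}) the Cram\'er--Rao bound equals $\limsup_n n/J^S[\Lambda_\theta^{\otimes n}]$, and for $\alpha=1$ the superadditivity (\ref{4-23-2}) together with Fekete's lemma turns this $\limsup$ into a genuine limit. So the real work is the middle equality $C_1[\Lambda_{\theta_0}]=\tilde{C}_1[\Lambda_{\theta_0}]$, which I would obtain by proving the two inequalities separately.

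For the lower bound $C_1[\Lambda_{\theta_0}]\ge\tilde{C}_1[\Lambda_{\theta_0}]$ I would reuse the proof of Proposition 1 almost verbatim: fixing any estimator sequence $\{(\rho_n,M^n)\}$ of finite local risk, the same reduction produces, for every $\delta>0$ and every small $\epsilon$, a point $\theta_n\in U_{\theta_0,\epsilon}$ satisfying (\ref{3-15-2}), hence by (\ref{4-30-1}) $n\MSE_{\theta_n}(\rho_n,M^n)\ge(1-\delta)^2\, n/J^S[\Lambda_{\theta_n}^{\otimes n}]$. The single point at which Proposition 1 invoked compact uniformity of (\ref{3-20-1}) can now be bypassed, and this is exactly why Proposition 2 drops that hypothesis: applied to the superadditive sequence, Fekete's lemma gives $J^S[\Lambda_\theta^{\otimes n}]/n\le\sup_m J^S[\Lambda_\theta^{\otimes m}]/m$, i.e. the clean pointwise bound
\[
\frac{n}{J^S[\Lambda_\theta^{\otimes n}]}\ \ge\ \tilde{C}_1[\Lambda_\theta]\qquad\text{for every }n.
\]
Thus $n\MSE_{\theta_n}\ge(1-\delta)^2\,\tilde{C}_1[\Lambda_{\theta_n}]\ge(1-\delta)^2\inf_{\theta\in U_{\theta_0,\epsilon}}\tilde{C}_1[\Lambda_\theta]$. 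Taking $\limsup_n$, then $\delta\to0$, then $\epsilon\to0$ and using the continuity of $\tilde{C}_1$ to pass the infimum to $\tilde{C}_1[\Lambda_{\theta_0}]$, I get $C_1[\Lambda_{\theta_0},\{(\rho_n,M^n)\}]\ge\tilde{C}_1[\Lambda_{\theta_0}]$; the infimum over estimators then yields the claim.

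For the upper bound $C_1[\Lambda_{\theta_0}]\le\tilde{C}_1[\Lambda_{\theta_0}]$ I would exhibit an attaining sequence by a variation of the two-step strategy. Fix a block length $m$. In the first step, spend $o(n)$ of the channel uses to build a consistent preliminary estimate $\hat\theta_1$, which lands in $U_{\theta_0,\epsilon}$ with probability tending to one. In the second step, partition the remaining (asymptotically all) uses into blocks of size $m$, feed each block the $m$-fold input state and SLD-achieving locally unbiased measurement optimal at $\hat\theta_1$, and average the per-block estimates. On the success event a law-of-large-numbers argument gives $n\MSE_\theta(\rho_n,M^n)\to m/J^S[\Lambda_\theta^{\otimes m}]$ uniformly on $U_{\theta_0,\epsilon}$; sending $n\to\infty$ and $\epsilon\to0$ shows this $m$-dependent estimator has local asymptotic mini-max risk $m/J^S[\Lambda_{\theta_0}^{\otimes m}]$, and taking the infimum over $m$ (equivalently $m\to\infty$) gives $C_1[\Lambda_{\theta_0}]\le\inf_m m/J^S[\Lambda_{\theta_0}^{\otimes m}]=\tilde{C}_1[\Lambda_{\theta_0}]$.

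The hard part will be controlling the rare event that the first step fails, and this is precisely where the hypothesis $E=\sup_{\theta\in\Theta}|\theta|<\infty$ enters. On that event the estimator may use a badly chosen input, but since both $\hat\theta$ and $\theta$ lie in $[-E,E]$ the squared error is at most $(2E)^2$; choosing the first-step sample size so that the failure probability decays faster than $1/n$ then makes its contribution to $n\MSE_\theta$ vanish. With that tail under control the two-step estimator attains $\tilde{C}_1[\Lambda_{\theta_0}]$ uniformly on the neighborhood, closing the gap between the two bounds and completing the proof.
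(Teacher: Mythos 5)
Your reduction of the right-hand equality to Fekete's lemma and your lower-bound argument $C_1[\Lambda_{\theta_0}]\ge\tilde{C}_1[\Lambda_{\theta_0}]$ are correct, and in fact they make explicit a step the paper leaves implicit: the paper's proof simply declares that only $\tilde{C}_1[\Lambda_{\theta_0}]\ge C_1[\Lambda_{\theta_0}]$ needs to be shown, and your observation that superadditivity (\ref{4-23-2}) gives the pointwise bound $n/J^S[\Lambda_\theta^{\otimes n}]\ge\tilde{C}_1[\Lambda_\theta]$ for every $n$, so that the argument of Proposition 1 survives without the compact-uniformity hypothesis, is exactly the right justification.

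The genuine gap is in your upper bound. Your per-block estimator is locally unbiased only at the preliminary estimate $\hat\theta_1$, hence biased at every other $\theta$ in the neighborhood: its expectation is some $\eta_{\hat\theta_1}(\theta)\neq\theta$. Averaging $k\approx n/m$ blocks reduces variance but not bias, so $n\MSE_\theta \approx m\, v(\theta) + n\bigl(\eta_{\hat\theta_1}(\theta)-\theta\bigr)^2$, and the second term is not controlled by your prescription. Killing it requires (i) a quantitative bias expansion $\eta_{\hat\theta_1}(\theta)-\theta=O(|\theta-\hat\theta_1|^2)$, i.e. second-order smoothness of the expectation functional, which is not among the hypotheses of the proposition, and (ii) a first-step size $n_1\gg\sqrt{n}$, since $n\cdot\mathrm{bias}^2=O(n/n_1^2)$; your stated criterion (failure probability decaying faster than $1/n$) is already met by $n_1\sim\log^2 n$, for which $n/n_1^2\to\infty$. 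So the claim that the law of large numbers gives $n\MSE_\theta\to m/J^S[\Lambda_\theta^{\otimes m}]$ uniformly on $U_{\theta_0,\epsilon}$ is unjustified as stated. The paper's proof is built precisely to dodge this: it needs no preliminary step (the bound is local at $\theta_0$), it regards the expectation $\eta(\theta)$ of the block-averaged estimator as a new coordinate, with respect to which that estimator is \emph{exactly} unbiased at every $\theta$, it then performs a second round of $l$-fold averaging so that the $\eta$-coordinate MSE decays exactly as $1/l$ uniformly on the neighborhood (equation (\ref{22-6})), and finally it converts back to the $\theta$-coordinate using only the first-order difference-quotient bound (\ref{22-5}) together with the rare-event term $Ep_l$. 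To repair your proof, either import this $\eta$-coordinate double-averaging device, or add explicitly the smoothness hypothesis and the choice $n_1\gg\sqrt{n}$ — but the latter proves a weaker statement than the proposition as stated.
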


\begin{proof}
As is mentioned in Section \ref{s3}, (\ref{4-23-2}) guarantees the convergence of
$\lim_{n \to \infty} 
\frac{n}{J^S[\Lambda_\theta^{\otimes n}]}$.
It is enough to show the inequality $\tilde{C}_1 [\Lambda_{\theta_0}]\ge 
C_1 [\Lambda_{\theta_0}]$.

For an arbitrary real number $\delta>0$ and an arbitrary integer $m$, 
let $\{\rho_m,M^m\}$ be a locally unbiased estimator at $\theta_0$ such that 
\begin{align}
\frac{1}{J^S[\Lambda_{\theta_0}^{\otimes m}]}+\delta
>
\MSE_{\theta_0}(\rho_m,M^m).
\Label{3-15-3-2}
\end{align}
We define another coordinate $\eta(\theta)$ by
\begin{align*}
\eta(\theta):=\int \hat{\theta} \Tr \Lambda_{\theta}(\rho_n) M^n(d \hat{\theta}),
\end{align*}
and denote the MSE concerning the parameter $\eta$ 
of an estimator $(\rho_n,M^n)$ by $\MSE_{\eta(\theta)}(\rho_n,M^n)$.
For any $\delta>0$,
we choose an integer $m$,
a sufficiently small $\epsilon'$-neighborhood $U_{\theta_0,\epsilon}$
such that
\begin{align}
\frac{\tilde{C}_1 [\Lambda_{\theta_0}]}{m} +\delta
&\ge
\frac{1}{J^S[\Lambda_{\theta_0}^{\otimes m}]}
\Label{3-15-3}
\\
1-\delta \le \frac{\eta(\theta)-\eta(\theta')}{\theta-\theta'}
& \le 1+\delta
 \Label{22-5} 
\end{align}
for $\forall \theta ,\theta' \in U_{\theta_0,\epsilon}$.

Next, let $(\rho_{nm}',{M^{nm}}' )$ be 
the estimator given as the average value of 
$n$ times applications of the estimator $(\rho_m,M^m )$ 
concerning the original parameter $\theta$.
Then, we can choose a sufficiently large number $n$ satisfying the following:
When the true parameter is $\theta_0$,
the estimate of $(\rho_{nm}',{M^{nm}}' )$ belongs to $U_{\theta_0,\epsilon}$
with the probability higher than $1-\delta$.
The second inequality in (\ref{22-5}) guarantees that
\begin{align}
& \frac{1}{n}\MSE_{\theta}(\rho_{m},{M^{m}})+\delta E \nonumber \\
=&
\MSE_{\theta}(\rho_{nm}',{M^{nm}}')+\delta E
\ge 
\frac{1}{(1+\delta)^2}
\MSE_{\eta(\theta)}(\rho_{nm}',{M^{nm}}').
\Label{22-7-a-b}
\end{align}
Thus, (\ref{3-15-3-2}), (\ref{3-15-3}), and (\ref{22-7-a-b})
imply that
\begin{align*}
(1+\delta)^2
(\frac{\tilde{C}_1 [\Lambda_{\theta_0}]}{nm} +\frac{2\delta}{n} +\delta E)
\ge
\MSE_{\eta(\theta_0)}(\rho_{nm}',{M^{nm}}').
\end{align*}
Since 
$\MSE_{\eta(\theta_0)}(\rho_{nm}',{M^{nm}}')$ is continuous
and
$(1+\delta)^2
(\frac{\tilde{C}_1 [\Lambda_{\theta_0}]}{nm} + \delta (E+2))
>
(1+\delta)^2
(\frac{\tilde{C}_1 [\Lambda_{\theta_0}]}{nm} +\frac{2 \delta}{n} + \delta E)$,
we can choose a sufficiently small number $0<\epsilon'<\epsilon$
such that
\begin{align}
& (1+\delta)^2
(\frac{\tilde{C}_1 [\Lambda_{\theta_0}]}{nm} + \delta (E+2))
\ge 
\MSE_{\eta(\theta)}(\rho_{nm}',{M^{nm}}')\Label{4-23-4}
\end{align}
for $\theta\in U_{\theta_0,\epsilon'}$.

Let $(\rho_{lnm}'',{M^{lnm}}'' )$ be 
the estimator given as the average value of 
$l$ times applications of the estimator $(\rho_{nm}',{M^{nm}}' )$ concerning $\eta$.
Since the estimator $(\rho_{nm}',{M^{nm}}' )$ 
is unbiased concerning the parameter $\eta$,
\begin{align}
\MSE_{\eta(\theta)}(\rho_{lnm}'',{M^{lnm}}'')
= \frac{\MSE_{\eta(\theta)}(\rho_{nm}',{M^{nm}}')}{l}.
\Label{22-6}
\end{align}
We choose a sufficiently large number $l$ satisfying the following:
When the true parameter is $\theta \in U_{\theta_0,\epsilon'/2}$,
the estimate $\eta$ of $(\rho_{lnm}'',{M^{lnm}}'' )$ belongs to $U_{\theta_0,\epsilon'}$
with the probability $1-p_l$,
where the probability $p_l$ exponentially goes to $0$ as $l\to \infty$.
The first inequality in (\ref{22-5})  guarantees that
\begin{align}
\frac{1}{(1-\delta)^2}
\MSE_{\eta(\theta)}(\rho_{lnm}'',{M^{lnm}}'')
+ E p_l
\ge
\MSE_{\theta}(\rho_{lnm}'',{M^{lnm}}'')
,\quad \forall
\theta \in U_{\theta_0,\epsilon'/2}.
\Label{22-7-a}
\end{align}
Thus, 
the relations (\ref{4-23-4}), (\ref{22-6}), and (\ref{22-7-a}) imply
\begin{align}
\frac{(1+\delta)^2}{l(1-\delta)^2}
(\frac{\tilde{C}_1 [\Lambda_{\theta_0}]}{nm} + \delta (E+2))
+ E p_l
\ge
\MSE_{\theta}(\rho_{lnm}'',{M^{lnm}}'')
,\quad \forall
\theta \in U_{\theta_0,\epsilon'/2}.
\Label{22-7-e}
\end{align}
Taking the limit $l \to \infty$, 
we obtain
\begin{align*}
\frac{(1+\delta)^2}{(1-\delta)^2}
(\tilde{C}_1 [\Lambda_{\theta_0}]+ nm \delta (E+2))
\ge
\lim_{l \to \infty} lnm \MSE_{\theta}(\rho_{lnm}'',{M^{lnm}}'')
,\quad \forall
\theta \in U_{\theta_0,\epsilon'/2}.
\end{align*}
Finally, using the above $n$ and $m$, 
we define a sequence of estimators $\{(\rho_{k}''',{M^{k}}''')\}$
by the following way.
For given $k$, we choose maximum $l$ such that $lnm \le k$.
Then, the estimator $(\rho_{k}''',{M^{k}}''')$ defined as
$(\rho_{lnm}'',{M^{lnm}}'')$.
In this definition, we only use $lnm$ applications, and 
the remaining $k-lnm$ applications are discarded.
So,
we obtain
\begin{align*}
\frac{(1+\delta)^2}{(1-\delta)^2}
(\tilde{C}_1 [\Lambda_{\theta_0}]+ nm \delta (E+2))
\ge
\limsup_{k \to \infty} k \MSE_{\theta}(\rho_{k}''',{M^{k}}''')
,\quad \forall
\theta \in U_{\theta_0,\epsilon'/2},
\end{align*}
which implies that
\begin{align*}
\frac{(1+\delta)^2}{(1-\delta)^2}
(\tilde{C}_1 [\Lambda_{\theta_0}]+ nm \delta (E+2))
\ge
{C}_1 [\Lambda_{\theta_0},\{(\rho_{k}''',{M^{k}}''')\}_{k=1}^{\infty}]
\ge
{C}_1 [\Lambda_{\theta_0}].
\end{align*}
Since $\delta>0$ is arbitrary,
$\tilde{C}_1 [\Lambda_{\theta_0}]\ge 
C_1 [\Lambda_{\theta_0}]$.
\end{proof}

Now, remember that the Cram\'{e}r-Rao bound 
can be attained by using the two-step method
in the case of state estimation.
By using the two-step method\cite{4,5}, 
the local asymptotic mini-max bound $C_\alpha [\Lambda_{\theta_0}]$
can be attained at all points $\theta$ as follows.
\begin{prop}\Label{22-1p}
Assume that $E:=\sup_{\theta\in \Theta}|\theta|<\infty $
and $C_1 [\Lambda_{\theta_0}]$ is continuous.
For any $\delta>0$,
there exists a sequence of estimators
$\{(\rho_n,M^n )\}$ such that
\begin{align}
C_\alpha [\Lambda_{\theta},\{(\rho_n,M^n)\}]
\le
C_\alpha [\Lambda_{\theta}]+\delta
\Label{7-24-3}
\end{align}
for all points $\theta$.
Further, when the parameter space $\Theta$ is compact,
\begin{align}
\lim_{n \to \infty}
n^{\alpha}
\min_{(\rho_n,M^n)}
\max_{\theta \in \Theta} 
\MSE_{\theta}(\rho_n,M^n )
=
\max_{\theta \in \Theta} C_\alpha[\Lambda_{\theta}].
\Label{7-24-2}
\end{align}
\end{prop}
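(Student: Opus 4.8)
The plan is to prove the global attainability (\ref{7-24-3}) by a two-step strategy, in the spirit of the construction used in the proof of Proposition 2, and then to read off the mini-max identity (\ref{7-24-2}) from (\ref{7-24-3}) together with a compactness covering. Fix $\delta>0$. Since $E<\infty$ the parameter set is bounded, so I can choose a finite grid $\theta_1,\dots,\theta_K$ whose spacing is small enough that continuity of $C_\alpha$ makes $C_\alpha[\Lambda_{\theta'}]$ vary by less than $\delta/3$ over each cell. For each grid point $\theta_j$ the definition of $C_\alpha[\Lambda_{\theta_j}]$ as an infimum supplies a locally optimal sequence $\{(\rho_n^{(j)},M^{n,(j)})\}$ with $C_\alpha[\Lambda_{\theta_j},\{(\rho_n^{(j)},M^{n,(j)})\}]\le C_\alpha[\Lambda_{\theta_j}]+\delta/3$. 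The size-$n$ estimator is then defined in two steps: reserve $m_n$ applications with $m_n\to\infty$ and $m_n/n\to0$ (say $m_n=\lceil(\log n)^2\rceil$), apply a fixed probing input and a measurement producing $m_n$ i.i.d.\ outcomes to obtain a consistent rough estimate $\tilde\theta_n$, round $\tilde\theta_n$ to the nearest grid point $\theta_{j(n)}$, and feed the remaining $n-m_n$ applications into the locally optimal estimator $(\rho_{n-m_n}^{(j(n))},M^{n-m_n,(j(n))})$.

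The error analysis splits on the localization event. On the good event $\{|\tilde\theta_n-\theta|\le\epsilon\}$ the chosen grid point $\theta_{j(n)}$ lies within $\epsilon$ plus one grid spacing of the true $\theta$, so $\theta$ sits in a fixed small neighborhood of $\theta_{j(n)}$; the local asymptotic mini-max property of the $j(n)$-th sequence then bounds $(n-m_n)^\alpha\MSE_\theta$ by $C_\alpha[\Lambda_{\theta_{j(n)}}]+\delta/3$, whence $n^\alpha\MSE_\theta\le C_\alpha[\Lambda_\theta]+\delta$ for $n$ large, using $n^\alpha/(n-m_n)^\alpha\to1$, continuity of $C_\alpha$, and the fact that only finitely many grid sequences occur so their asymptotic thresholds can be taken uniform. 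On the bad event $\{|\tilde\theta_n-\theta|>\epsilon\}$ the estimate still lies in $\Theta$, so the squared error is at most $(2E)^2$; since the probing estimator is consistent with exponential concentration, $\Pr(|\tilde\theta_n-\theta|>\epsilon)\le e^{-cm_n}=o(n^{-\alpha})$ uniformly in $\theta$, and this tail contributes $o(1)$ to $n^\alpha\MSE_\theta$. Taking $\limsup_n$, then the supremum over a shrinking neighborhood of each $\theta_0$, and finally $\epsilon\to0$ gives $C_\alpha[\Lambda_{\theta_0},\{(\rho_n,M^n)\}]\le C_\alpha[\Lambda_{\theta_0}]+\delta$, which is (\ref{7-24-3}).

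For the mini-max identity, write $R_n:=\min_{(\rho_n,M^n)}\max_{\theta\in\Theta}\MSE_\theta$. The upper bound comes from the sequence just built: compactness of $\Theta$ lets me cover it by finitely many neighborhoods $U_{\theta_0,\epsilon(\theta_0)}$ on each of which $\limsup_n\sup_U n^\alpha\MSE_\theta\le C_\alpha[\Lambda_{\theta_0}]+2\delta$, so $\limsup_n n^\alpha R_n\le\limsup_n n^\alpha\max_\theta\MSE_\theta\le\max_\theta C_\alpha[\Lambda_\theta]+2\delta$, and I let $\delta\to0$. For the lower bound, let $(\rho_n^*,M^{n*})$ attain $R_n$ and treat them as one sequence. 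For any $\theta^*$ and $\epsilon$ one has $\max_\theta n^\alpha\MSE_\theta\ge\sup_{\theta\in U_{\theta^*,\epsilon}}n^\alpha\MSE_\theta$; taking $\limsup_n$ and then $\epsilon\to0$ yields $\limsup_n n^\alpha R_n\ge C_\alpha[\Lambda_{\theta^*},\{(\rho_n^*,M^{n*})\}]\ge C_\alpha[\Lambda_{\theta^*}]$, and maximizing over $\theta^*$ (attained by compactness and continuity) gives $\limsup_n n^\alpha R_n\ge\max_\theta C_\alpha[\Lambda_\theta]$. The two bounds pin down $\limsup_n n^\alpha R_n=\max_\theta C_\alpha[\Lambda_\theta]$; existence of the genuine limit then follows by sandwiching, and for $\alpha=1$ it is already guaranteed by the convergence of $n/J^S[\Lambda_\theta^{\otimes n}]$ established earlier.

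The hard part is the error analysis of the two-step estimator. One must show simultaneously that the localization step is consistent with failure probability $o(n^{-\alpha})$ uniformly in $\theta$, so that the bounded tail contribution $(2E)^2\Pr(\text{fail})$ stays negligible after multiplication by $n^\alpha$, and that the locally optimal second-step performance is transferred to the true $\theta$ through the continuity of $C_\alpha$ and the finiteness of the grid, uniformly as the random grid point $\theta_{j(n)}$ wanders. The second delicate point is matching $\liminf$ and $\limsup$ in (\ref{7-24-2}), since the local asymptotic mini-max risk constrains only the $\limsup$ of a full sequence and not that of an arbitrary subsequence of per-$n$ minimizers.
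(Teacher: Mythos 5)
Your proposal is correct and takes essentially the same route as the paper's own proof: a two-step strategy that spends a vanishing fraction of the channel uses on rough localization with exponentially small failure probability (the paper uses $\sqrt{n}$ uses rather than $(\log n)^2$), covers $\Theta$ by neighborhoods on which near-optimal local sequences exist and on which $C_\alpha$ varies by at most $O(\delta)$ via continuity, and bounds the bad-event contribution by $E\, P_{\theta,n}=o(n^{-\alpha})$, with compactness giving the uniformity needed for (\ref{7-24-2}). Even the two points you flag as delicate are handled at the same level of rigor as in the paper, which evaluates the MSE by exactly your good/bad-event decomposition (its inequality (\ref{7-24-1})) and dismisses the lower-bound direction of (\ref{7-24-2}) as "trivial" — i.e., it, too, only pins down the $\limsup$ via the per-$n$ minimizers viewed as a single sequence, leaving the $\liminf$ matching implicit.
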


\begin{proof}
We use a two-step method slightly different from \cite{5}.
Before applying the unknown channel $\Lambda_\theta$,
for any real number $\delta>0$,
we choose an $\epsilon_i$-neighborhood $U_{\theta_i,\epsilon_i}$
satisfying the following three conditions:
(1) $\cup_{i} U_{\theta_i,\epsilon_i}=\Theta$.
(2) For any $\theta_i$, there exists a sequence of estimators
$\{(\rho_n(\theta_i),M^n(\theta_i) )\}$ such that
$C_\alpha [\Lambda_{\theta_i}]+\delta/2
\ge \lim_{\epsilon \to 0}
\limsup_{n \to\infty} n^\alpha 
\sup_{\theta \in U_{\theta_i,\epsilon}}
\MSE_\theta(\rho_n(\theta_i) ,M^n(\theta_i) )$.
(3)
$\sup_{i: \theta \in U_{\theta_i,\epsilon_i}}
C_\alpha [\Lambda_{\theta_i}] \le
C_\alpha [\Lambda_{\theta}]+\delta/2$.

We divide $n$ applications of the unknown channel $\Lambda_\theta$
to two groups:
The first group consists of $\sqrt{n}$ applications and the second group consists of
$n-\sqrt{n}$ applications.
In the first step, we apply a POVM $M$ to the first group.
This POVM $M$ is a POVM on the single system ${\cal H}$ and 
is required to satisfy that $J_\theta$ is non-degenerate at all points $\theta$.
Based on $\sqrt{n}$ obtained data,
we estimate which $\epsilon_i$-neighborhood $U_{\theta_i,\epsilon_i}$
contains the true parameter,
and obtain the first step estimate $\theta_{\hat{i}}$.
The error probability $P_{\theta,n}$ of this step goes to $0$ exponentially,
i.e., $P_{\theta,n}$ behaves as $e^{-c \sqrt{n}}$, where $c$ depends on $\theta$.

In the second step, 
we apply the estimator $(\rho_{n-\sqrt{n}}(\theta_i) ,M^{n-\sqrt{n}}(\theta_i) )$
to the second group,
and obtain our final estimate from the outcome of 
the estimator 
$(\rho_{n-\sqrt{n}}(\theta_i),M^{n-\sqrt{n}}(\theta_i) )$.
We express this estimator by
$(\rho_n,M^n )$.
Its MSE is evaluated as
\begin{align}
\MSE_{\theta}(\rho_n,M^n )
\le
E P_{\theta,n}
+
(1-P_{\theta,n})
\sup_{i: \theta \in U_{\theta_i,\epsilon_i}}
\MSE_\theta(\rho_{n-\sqrt{n}}(\theta_i) ,M^{n-\sqrt{n}}(\theta_i) ).
\Label{7-24-1}
\end{align}
Since $n^{\alpha}E P_{\theta,n}$ goes to $0$,
we obtain
\begin{align*}
\limsup_{n \to \infty}n^{\alpha}\MSE_{\theta}(\rho_n,M^n )
\le
\sup_{i: \theta \in U_{\theta_i,\epsilon_i}}
C_\alpha [\Lambda_{\theta_i}]+\delta/2
\le
C_\alpha [\Lambda_{\theta}]+\delta.
\end{align*}
Thus, we obtain (\ref{7-24-3}).
Further, the relation (\ref{7-24-1}) yields that
\begin{align*}
& \sup_{\theta\in \Theta} \MSE_{\theta}(\rho_n,M^n ) \\
\le &
\sup_{\theta\in \Theta} E P_{\theta,n}
+
\sup_{\theta\in \Theta} (1-P_{\theta,n})
\sup_{i: \theta \in U_{\theta_i,\epsilon_i}}
\MSE_\theta(\rho_{n-\sqrt{n}}(\theta_i) ,M^{n-\sqrt{n}}(\theta_i) ).
\end{align*}
The compactness of $\Theta$ guarantees that
$\sup_{\theta\in \Theta} n^{\alpha} E P_{\theta,n} \to 0$.
Thus,
\begin{align*}
\limsup_{n \to \infty}n^{\alpha} \sup_{\theta\in \Theta} \MSE_{\theta}(\rho_n,M^n )
\le
\sup_{\theta\in \Theta} \sup_{i: \theta \in U_{\theta_i,\epsilon_i}}
C_\alpha [\Lambda_{\theta_i}]+\delta/2
\le
\sup_{\theta\in \Theta} C_\alpha [\Lambda_{\theta}]+\delta.
\end{align*}
Since the part $\ge$ of (\ref{7-24-2}) is trivial,
we obtain (\ref{7-24-2}).
\end{proof}

Proposition \ref{22-1p} holds 
even when we replace the MSE by a general error function $R(\theta,\hat{\theta})$
for one-parametric family satisfying the following conditions:
(1) 
the relation $R(\theta,\hat{\theta})\cong (\hat{\theta}-\theta)^2$ 
holds with a local coordinate
when $\hat{\theta}$ is close to $\theta$.
(2) the maximum of $R(\theta,\hat{\theta})$ exists.
Therefore, we can apply Proposition \ref{22-1p} to 
the following case:
Assume that the one-parameter channel family $\{\Lambda_\theta\}$ has a compact group covariant structure,
that is,
its parameter space is given as an interval $[a,b)$ and
there is a unitary representation $U_\theta$ of $\R$
such that
$U_{\theta'}\Lambda_\theta(\rho) U_{\theta'}^\dagger=
\Lambda_{\theta+\theta'}(\rho)$.
The error is given by $\min_{k \in \integer} (\hat{\theta}+k(b-a)-\theta )^2 $
instead of the square error $(\hat{\theta}-\theta)^2 $.
In this case, due to the group covariance,
$C_\alpha [\Lambda_{\theta}]$ does not depend on the true parameter $\theta$.
Application of Proposition \ref{22-1p} implies
that the global min-max error behaves $C_\alpha [\Lambda_{\theta}]\frac{1}{n^{\alpha}}$.

In the phase estimation case, the unknown parameter $\theta$ belongs to $[0,2\pi)$, and
the minimum of the worst value of the average error $\max_{\theta} \MSE_\theta(\rho_n,M^n)$
behaves as $\frac{\pi^2}{n^2}$ \cite{Luis,BDR,IH}.
That is, the leading decreasing order is $O(1/n^2)$ and 
the leading decreasing coefficient is $\pi^2$ when we apply the optimal estimator.
Proposition \ref{22-1p} implies that
$\max_{\theta} C_2 [\Lambda_{\theta}]=\pi^2$.
Since $C_2 [\Lambda_{\theta}]$ does not depend on $\theta$ due to the homogenous structure,
we can conclude that $C_2 [\Lambda_{\theta}]=\pi^2$.
So, the equation $J^S[\Lambda_\theta^{\otimes n}]= n^2$ implies
the equation $\tilde{C}_2 [\Lambda_{\theta}]=1$.
Hence, the Cram\'{e}r-Rao bound $\tilde{C}_2 [\Lambda_{\theta}]$ cannot be attained globally in this model.
However, it can be attained in a specific point in the following sense.

\begin{prop}\Label{pro1}
Assume that $E:=\sup_{\theta\in \Theta}|\theta|<\infty $
and $C_1 [\Lambda_{\theta_0}]$ is continuous.
For any $\delta>0$ and 
any $\theta_0\in \Theta$,
there exists a sequence of estimators
$\{(\rho_{n,\theta_0},M^n_{\theta_0} )\}$ satisfying the 
asymptotically locally unbiased condition
and the relations:
\begin{align*}
\limsup_{n \to \infty}
n^{\alpha}\MSE_{\theta}(\rho_{n,\theta_0},M^n_{\theta_0} )
& \le
C_\alpha [\Lambda_{\theta}]+\delta,
\quad\forall \theta\neq \theta_0 \\
\limsup_{n \to \infty}
n^{\alpha}\MSE_{\theta_0}(\rho_{n,\theta_0},M^n_{\theta_0} )
& \le
\tilde{C}_\alpha [\Lambda_{\theta_0}]+\delta.
\end{align*}
\end{prop}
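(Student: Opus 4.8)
The plan is to build a Hodges-type superefficient estimator that switches, on the basis of a cheap preliminary test, between the globally mini-max optimal estimator and an estimator finely tuned to be locally unbiased at $\theta_0$. Concretely, I would first invoke Proposition \ref{22-1p} to fix a sequence $\{(\rho_n^{(2)},M^{n,(2)})\}$ that is asymptotically locally unbiased and satisfies $\limsup_{n} n^{\alpha}\MSE_\theta(\rho_n^{(2)},M^{n,(2)}) \le C_\alpha[\Lambda_{\theta}]+\delta$ at every $\theta$; since the local estimators in its two-step construction may be taken locally unbiased at their centers, asymptotic local unbiasedness comes for free. Separately, the definition (\ref{3-20-1}) of $\tilde C_\alpha$ furnishes a sequence $\{(\rho_n^{(1)},M^{n,(1)})\}$ that is genuinely locally unbiased at $\theta_0$ and attains $\limsup_n n^{\alpha}\MSE_{\theta_0} \le \tilde C_\alpha[\Lambda_{\theta_0}]+\delta$.

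I would glue these by a two-step method in the spirit of \cite{4,5}. Reserve $\sqrt n$ of the $n$ channel uses for a fixed first-step POVM with non-degenerate Fisher information, as in the proof of Proposition \ref{22-1p}, producing a rough estimate $\tilde\theta_n$ whose probability $P_{\theta,n}$ of missing the true value on any fixed scale decays like $e^{-c\sqrt n}$. Fix a symmetric shrinking window $W_n=(\theta_0-a_n,\theta_0+a_n)$ with $a_n\to 0$ chosen slowly enough that $\theta_0$ is detected reliably yet fast enough that every fixed $\theta\neq\theta_0$ eventually escapes $W_n$ (for instance $a_n=n^{-1/4}$). On the remaining $n-\sqrt n$ uses, apply the locally unbiased estimator $(\rho_n^{(1)},M^{n,(1)})$ when $\tilde\theta_n\in W_n$ and the mini-max estimator $(\rho_n^{(2)},M^{n,(2)})$ otherwise, and call the result $(\rho_{n,\theta_0},M^n_{\theta_0})$.

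The risk bounds then follow as in (\ref{7-24-1}): at $\theta_0$ the locally unbiased branch is selected with probability $1-P_{\theta_0,n}$, and the misclassification contributes at most $E\,P_{\theta_0,n}\,n^{\alpha}\to 0$, yielding $\tilde C_\alpha[\Lambda_{\theta_0}]+\delta$; at a fixed $\theta\neq\theta_0$ the same estimate selects the mini-max branch with overwhelming probability once $\theta\notin W_n$, yielding $C_\alpha[\Lambda_{\theta}]+\delta$. The delicate point, and the genuine content of the superefficiency phenomenon, is the asymptotic local unbiasedness at $\theta_0$. Writing the mean value $\eta_n(\theta)$ as the $P_1(\theta)$-weighted mixture of the two branches, with $P_1(\theta):=\Pr(\tilde\theta_n\in W_n\mid\theta)$, its derivative at $\theta_0$ contains a term $P_1'(\theta_0)\,[\eta_n^{(1)}(\theta_0)-\eta_n^{(2)}(\theta_0)]$ whose prefactor would otherwise blow up like $1/a_n$. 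This term vanishes because $W_n$ is symmetric about $\theta_0$, so $P_1(\theta)$ is maximized there and $P_1'(\theta_0)=0$; what survives is $P_1(\theta_0)\,(\eta_n^{(1)})'(\theta_0)+(1-P_1(\theta_0))\,(\eta_n^{(2)})'(\theta_0)\to 1$ by local unbiasedness of the first branch, while at every other point the window has receded and local unbiasedness is inherited from the mini-max estimator. I expect the main obstacle to be exactly this derivative bookkeeping at $\theta_0$, together with the matched choice of $a_n$ that keeps $P_{\theta,n}\,n^{\alpha}$ negligible while $W_n$ still collapses onto $\theta_0$.
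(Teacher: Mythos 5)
Your construction coincides with the paper's own proof: the paper likewise runs the first step of Proposition \ref{22-1p} over the fixed covering $\{U_{\theta_i,\epsilon_i}\}$ augmented by a shrinking window $U_{\theta_0,n^{-1/4}}$, applies the Cram\'{e}r-Rao-attaining locally unbiased estimator when the rough estimate lands in that window, and the mini-max estimator of Proposition \ref{22-1p} otherwise, with the risk bookkeeping done exactly as in (\ref{7-24-1}). The paper never verifies the asymptotic local unbiasedness of the glued estimator, so your last paragraph is genuinely additional content --- and that is where the gap is.

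The step ``$W_n$ is symmetric about $\theta_0$, hence $P_1(\theta)$ is maximized at $\theta_0$, hence $P_1'(\theta_0)=0$'' is not valid. Symmetry of the window says nothing about the $\theta$-dependence of the law of the first-step estimate: its Fisher information, bias and skewness all vary with $\theta$, so the probability of hitting a window centered at $\theta_0$ need not peak at $\theta_0$ (for instance, if the first-step variance is decreasing in $\theta$, one computes $P_1'(\theta_0)$ of exact order $a_n J_1 \sim n^{1/4}$, not zero). The cross term dies for a quantitative reason, not a symmetry reason: writing $P_1'(\theta_0)=E_{\theta_0}[(\mathbf{1}_{W_n}-P_1)\,\ell']$ with $\ell'$ the score of the first-step data and applying Cauchy--Schwarz gives $|P_1'(\theta_0)|\le\sqrt{J_1 P_1(1-P_1)}$ with first-step Fisher information $J_1=O(\sqrt{n})$, while both branches have MSE $O(n^{-\alpha})$ at $\theta_0$, so $|\eta_n^{(1)}(\theta_0)-\eta_n^{(2)}(\theta_0)|=O(n^{-\alpha/2})$ and the product is $O(n^{1/4-\alpha/2})\to 0$ for $\alpha\ge 1$. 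A second, related problem: your illustrative choice $a_n=n^{-1/4}$ (which is also the paper's choice) does not ``detect $\theta_0$ reliably,'' because the first step uses only $\sqrt{n}$ copies, so the rough estimate fluctuates on the same scale $n^{-1/4}$ as the window; then $P_1(\theta_0)$ tends to a constant strictly less than $1$, the MSE at $\theta_0$ picks up a fixed fraction of $C_\alpha>\tilde{C}_\alpha$, and the second claimed inequality fails once $\delta$ is small, while $P_1(\theta_0)(\eta_n^{(1)})'(\theta_0)$ fails to tend to $1$. You need $a_n\to 0$ with $a_n n^{1/4}\to\infty$ (say $a_n=n^{-1/8}$), which makes $1-P_1(\theta_0)$ decay like $e^{-cn^{1/4}}$ and restores both the MSE bound and the derivative condition at $\theta_0$.
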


In estimation of probability distribution,
there exists a superefficient estimator that has smaller error 
at a discrete set than the Cram\'{e}r-Rao bound\cite{L}.
Since such a superefficient estimator cannot be useful,
many statisticians think that 
it is better to impose a condition for our estimators for removing superefficient estimators.
In this classical case, if we assume the asymptotic locally unbiased condition,
we have no superefficient estimator. 
Proposition \ref{pro1} means that
even if the asymptotic locally unbiased condition is assumed,
there exists an estimator that behaves in the similar way to
a superefficient estimator in the case of unitary estimation.
So, we call such an estimator a q-channel-superefficient estimator.
That is, 
a sequence of estimators $\{(\rho_n,M^n)\}$
is called {\it q-channel-superefficient} at $\theta$ with the order $\frac{1}{n^{\alpha}}$
when $\limsup_{n \to\infty} n^\alpha 
\MSE_\theta(\rho_n,M^n)
< C_\alpha [\Lambda_{\theta}]$.
Hence, in order to remove the q-channel-superefficiency problem,
it is better to adopt the bound
$C_\alpha [\Lambda_{\theta}]$
as the criterion instead of
$\tilde{C}_\alpha [\Lambda_{\theta}]$.

\begin{proof}
We choose $\epsilon_i$-neighborhoods $U_{\theta_i,\epsilon_i}$
in the same way,
and define the neighborhood $U_{\theta_0,\frac{1}{n^{1/4}}}$.
We apply the same first step as Proposition \ref{22-1p}
to neighborhoods 
$\{U_{\theta_i,\epsilon_i}\}_i\cup\{U_{\theta_0,\frac{1}{n^{1/4}}}\}$,
and obtain the first step estimate $\theta_{\hat{i}}$.
When the first step estimate $\theta_{\hat{i}}$ is not $\theta_0$,
we apply the same method as Proposition \ref{22-1p} in the second step.
When the first step estimate $\theta_{\hat{i}}$ is $\theta_0$,
we apply the asymptotically locally unbiased estimator
whose MSE behaves as
$(\tilde{C}_\alpha [\Lambda_{\theta}]+\delta)/n^\alpha$ asymptotically.
\end{proof}

Further, since there exists an asymptotically locally unbiased estimator that surpasses 
the bound $C_\alpha [\Lambda_{\theta}]$,
the asymptotically locally unbiased condition is too weak for deriving 
the local asymptotic mini-max bound, which is more meaningful.
In order to avoid this problem, 
it is sufficient to impose the following condition:
\begin{description}
\item[(CU)]
The limit
$\lim_{n \to\infty} n^\alpha \MSE_\theta(\rho_n,M^n)$
exists for all $\theta$ 
and this convergence is compactly uniform concerning $\theta$.
\end{description}
Under the condition (CU),
$\lim_{n \to\infty} n^\alpha \MSE_\theta(\rho_n,M^n)$
is continuous concerning $\theta$, and 
\begin{align*}
\lim_{n \to\infty} n^\alpha 
\sup_{\theta \in U_{\theta_0,\epsilon}}
\MSE_\theta(\rho_n,M^n)
=
\sup_{\theta \in U_{\theta_0,\epsilon}}
\lim_{n \to\infty} n^\alpha 
\MSE_\theta(\rho_n,M^n).
\end{align*}
Thus,
\begin{align*}
& \lim_{n \to\infty} n^\alpha 
\MSE_{\theta_0}(\rho_n,M^n)
=
\lim_{\epsilon \to 0}
\sup_{\theta \in U_{\theta_0,\epsilon}}
\lim_{n \to\infty}  
n^\alpha
\MSE_\theta(\rho_n,M^n) \\
=&
\lim_{\epsilon \to 0}
\lim_{n \to\infty} n^\alpha 
\sup_{\theta \in U_{\theta_0,\epsilon}}
\MSE_\theta(\rho_n,M^n)
=
C_\alpha [\Lambda_{\theta_0},\{(\rho_n,M^n)\}].
\end{align*}
Therefore, we obtain the following corollary.
\begin{cor}
When a sequence of estimators $\{ (\rho_n,M^n)\}$
satisfies the condition (CU),
\begin{align*}
\lim_{n \to\infty} n^\alpha 
\MSE_\theta(\rho_n,M^n)
\ge
C_\alpha [\Lambda_{\theta}].
\end{align*}
\end{cor}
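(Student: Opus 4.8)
The plan is to read the statement directly off the chain of identities established immediately above it, combined with the definition of $C_\alpha[\Lambda_\theta]$ as an infimum. All of the analytic content has in fact already been extracted from condition (CU): the substance is that for the given sequence $\{(\rho_n,M^n)\}$ one has
\begin{align*}
\lim_{n \to\infty} n^\alpha \MSE_{\theta}(\rho_n,M^n)
= C_\alpha [\Lambda_{\theta},\{(\rho_n,M^n)\}],
\end{align*}
so that the inequality to be proved is nothing but the elementary bound of a value by an infimum.

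First I would record why this equality holds, since it is the only nontrivial ingredient. Condition (CU) supplies two facts. The compact uniformity of the convergence of $n^\alpha \MSE_\theta(\rho_n,M^n)$ to its pointwise limit allows one to interchange $\lim_{n\to\infty}$ with $\sup_{\theta\in U_{\theta_0,\epsilon}}$ on any bounded neighborhood, yielding
\begin{align*}
\lim_{n \to\infty} n^\alpha \sup_{\theta \in U_{\theta_0,\epsilon}} \MSE_\theta(\rho_n,M^n)
= \sup_{\theta \in U_{\theta_0,\epsilon}} \lim_{n \to\infty} n^\alpha \MSE_\theta(\rho_n,M^n).
\end{align*}
Since the pointwise limit is continuous (again a consequence of (CU)), letting $\epsilon \to 0$ collapses the supremum on the right to the value of the limit function at $\theta_0$, which is $\lim_{n} n^\alpha \MSE_{\theta_0}(\rho_n,M^n)$; on the left the same limit produces, by definition, the local asymptotic mini-max risk $C_\alpha[\Lambda_{\theta_0},\{(\rho_n,M^n)\}]$. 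This is precisely the displayed equality.

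Finally I would invoke the definition $C_\alpha[\Lambda_\theta] = \inf_{\{(\rho_n,M^n)\}} C_\alpha[\Lambda_\theta,\{(\rho_n,M^n)\}]$. Because the infimum over all sequences of estimators is no larger than the value attained by the particular sequence under consideration, we get $C_\alpha[\Lambda_\theta,\{(\rho_n,M^n)\}] \ge C_\alpha[\Lambda_\theta]$, and combining this with the equality above gives $\lim_{n} n^\alpha \MSE_\theta(\rho_n,M^n) \ge C_\alpha[\Lambda_\theta]$, as claimed. There is essentially no obstacle: the entire difficulty resides in the interchange-of-limits step, and that is already dispatched by the hypothesis of compact uniform convergence together with the continuity it forces on the limit function. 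The one point demanding slight care is that this interchange genuinely uses both halves of (CU) — uniformity to move $\lim_n$ through the supremum, and continuity to take $\epsilon \to 0$ — so neither half can be discarded.
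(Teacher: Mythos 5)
Your proposal is correct and follows essentially the same route as the paper: the paper likewise uses the compact uniformity in (CU) to interchange $\lim_{n\to\infty}$ with $\sup_{\theta\in U_{\theta_0,\epsilon}}$, the continuity of the limit function to collapse the supremum as $\epsilon\to 0$, thereby identifying $\lim_{n\to\infty} n^\alpha \MSE_{\theta_0}(\rho_n,M^n)$ with $C_\alpha[\Lambda_{\theta_0},\{(\rho_n,M^n)\}]$, and then concludes by the definition of $C_\alpha[\Lambda_{\theta_0}]$ as an infimum. No gaps beyond those present in the paper's own argument.
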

Therefore, 
the condition (CU) is better 
in estimation of quantum channel
than the asymptotically locally unbiased condition.

Finally, we consider the relation with the adaptive method proposed by Nagaoka\cite{N2}.
In this method, we apply our POVM to each single system ${\cal H}$,
and 
we decide the $k$-th POVM based on the knowledge of previous $k-1$ outcomes.
In this case, Fujiwara \cite{F1} analyzed 
the asymptotic behavior of the MSE of this estimator.
Now, we consider the case of $nm$ applications of the unknown channel
$\Lambda_\theta$.
In this case, we divide $nm$ applications into $n$ groups consisting of $m$ applications.
When we apply the adaptive method 
mentioned in Fujiwara\cite{F1} to these groups,
the MSE of this estimator behaves as 
$\frac{1}{n J^S[\Lambda_\theta^{\otimes m}]}$,
which is close to 
$\frac{\tilde{C}_\alpha [\Lambda_{\theta}]}{n m^{\alpha}}$.
So, when $\alpha>1$, 
this method cannot realize the optimal order $O(\frac{1}{(nm)^\alpha})$.

\section{Discussion}

We have compared 
the Cram\'{e}r-Rao bound $\tilde{C}_{\alpha}[\Lambda_\theta]$ and 
the local asymptotic mini-max bound $C_{\alpha}[\Lambda_\theta]$ 
in quantum channel estimation, which contains quantum state estimation.
When the model has group covariant structure, 
the local asymptotic mini-max bound $C_{\alpha}[\Lambda_\theta]$ coincides with the limit of the global mini-max bound.
We have also shown that both bounds 
$\tilde{C}_{\alpha}[\Lambda_\theta]$ and $C_{\alpha}[\Lambda_\theta]$ 
coincide in quantum channel estimation 
when the maximum of SLD Fisher information $J^S[\Lambda_\theta^{\otimes n}]$ behaves as $O(n)$.
The case of state estimation can be regarded as a special case of this case.
That is, the conventional state estimation has no difference between both bounds.
However,
we have shown that the Cram\'{e}r-Rao bound $\tilde{C}_{\alpha}[\Lambda_\theta]$ is different from the local asymptotic mini-max bound $C_{\alpha}[\Lambda_\theta]$ in the phase estimation.
So, we can conclude that 
the local asymptotic mini-max bound $C_{\alpha}[\Lambda_\theta]$ is more meaningful
and does not necessarily coincide with the Cram\'{e}r-Rao bound $\tilde{C}_{\alpha}[\Lambda_\theta]$.

In order to clarify the asymptotic leading order of $J^S[\Lambda_\theta^{\otimes n}]$,
we have derived the condition (C) as a sufficient condition for $J^S[\Lambda_\theta^{\otimes n}]=O(n)$.
That is, the condition $\|\Tr_{{\cal K}} D[\Lambda_\theta] \rho[\Lambda_\theta]^{-1} D[\Lambda_\theta]\|
 =\infty$ is a necessary condition for square speedup.
This condition has been derived from the following two facts.
One is the supremum of the RLD Fisher information satisfies the additive property.
The other is the RLD Fisher information is an upper bound of the SLD Fisher information.
This, the supremum of the RLD Fisher information is the upper bound of 
the regularized supremum of the SLD Fisher information,
which equals the inverse of the Cram\'{e}r-Rao bound. 
However, it is an open problem to clarify whether this upper bound can be attained by the regularized SLD Fisher information.

Further, Fujiwara and Imai \cite{FI2} 
and Matsumoto \cite{keiji}
also obtained another sufficient condition.
Since the relation with their conditions is not clear, 
its clarification is an open problem.
Our condition (C) trivially contains the case when the state 
$\rho[\Lambda_\theta]$ is a full rank state on the tensor product system
while it is not so easy to derive the above full rank condition from 
Fujiwara and Imai's condition.
Further, we have also obtained another example for 
$J^S[\Lambda_\theta^{\otimes n}]=O(n^2)$ under the condition (C).
This example is a larger class than the unitary model.
So, we can expect that 
$J^S[\Lambda_\theta^{\otimes n}]$ behaves as $O(n^2)$
if the condition (C) does not hold.
This is a challenging open problem.


\section*{Acknowledgment}
The author was partially supported by a Grant-in-Aid for Scientific Research
in the Priority Area `Deepening and Expansion of Statistical 
Mechanical Informatics (DEX-SMI)', No. 18079014
and a MEXT Grant-in-Aid for Young Scientists (A) No. 20686026.
The Centre for Quantum Technologies is funded by the Singapore Ministry of Education
and the National Research Foundation as part of the Research Centres of Excellence
programme.
The author thanks Mr. Wataru Kumagai for helpful comments.
He also thanks the referees 
for helpful comments concerning this manuscript. 
In particular, 
the first referee's report was much help to improve the presentation in Section 1.

\appendix

\section*{Lemmas needed for Theorem \ref{th1}}
\begin{lem} \label{l3-8}
Any strictly positive definite matrix $A$ and any projection $P$ satisfy the inequality
\begin{align}
A^{-1} \ge (PAP)^{-1}, \label{7-23-1}
\end{align}
where $(PAP)^{-1}$ is the inverse matrix with the domain $P$.
\end{lem}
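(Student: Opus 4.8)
The plan is to avoid any block decomposition and instead use the variational (Legendre-dual) representation of the quadratic form associated with the inverse of a positive definite matrix. The starting point is the elementary identity $\langle y, A^{-1} y\rangle = \max_x \left( 2\re\langle x, y\rangle - \langle x, A x\rangle\right)$, valid for every vector $y$, with the maximum attained at $x = A^{-1} y$. This converts the comparison of two (pseudo)inverses into a comparison of two maximization problems over different feasible sets, which is trivial to order once set up correctly.

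First I would fix the interpretation of the right-hand side: $(PAP)^{-1}$ denotes the inverse of $PAP$ restricted to $\Ran P$, extended by $0$ on $\Ran(I-P)$, so that $(PAP)^{-1} = P(PAP)^{-1}P$ and hence $\langle y, (PAP)^{-1} y\rangle = \langle Py, (PAP)^{-1} Py\rangle$ for all $y$. Applying the same variational identity inside the subspace $\Ran P$, and using that $\langle x, PAP x\rangle = \langle x, Ax\rangle$ whenever $x \in \Ran P$, I would obtain $\langle Py, (PAP)^{-1} Py\rangle = \max_{x\in \Ran P}\left(2\re\langle x, y\rangle - \langle x, Ax\rangle\right)$, where I also use $\langle x, Py\rangle = \langle x, y\rangle$ for $x \in \Ran P$.

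The final step is then immediate: the unconstrained maximum dominates the constrained one, so for every $y$ one has $\langle y, A^{-1} y\rangle = \max_x(\cdots) \ge \max_{x\in\Ran P}(\cdots) = \langle y, (PAP)^{-1} y\rangle$, which is exactly the operator inequality $A^{-1}\ge (PAP)^{-1}$.

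I expect the only genuine subtlety to be the bookkeeping around the definition of $(PAP)^{-1}$ as a full operator supported on $\Ran P$ and the reduction of a general $y$ to its component $Py$; the variational formulation dispatches both transparently, which is why I favor it. An alternative route writes $A$ in block form with respect to $\Ran P \oplus \Ran(I-P)$ and reduces the claim to $B_{11}\ge A_{11}^{-1}$ for the Schur complement $B_{11}=(A_{11}-A_{12}A_{22}^{-1}A_{21})^{-1}$, together with anti-monotonicity of $x\mapsto x^{-1}$; but that approach additionally requires checking positive semidefiniteness of the full matrix $A^{-1}-(PAP)^{-1}$, whose off-diagonal blocks do not vanish, so the variational argument is the cleaner one.
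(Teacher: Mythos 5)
Your proof is correct, but it takes a genuinely different route from the paper. The paper works with the block decomposition of $A$ relative to $P$: it first proves an auxiliary lemma bounding the off-diagonal part, $PA(I-P)+(I-P)AP \le \epsilon P+\frac{R^2}{\epsilon}(I-P)$ with $R=\|PA(I-P)\|$, uses it to dominate $A$ by the block-diagonal operator $P(A+\epsilon I)P+(I-P)(A+\frac{R^2}{\epsilon}I)(I-P)$, inverts blockwise using the operator monotonicity of $x\mapsto -x^{-1}$, discards the $(I-P)$ block, and finally takes $\epsilon\to 0$. Your variational argument, based on $\langle y, A^{-1}y\rangle=\max_x\left(2\re\langle x,y\rangle-\langle x,Ax\rangle\right)$ and the observation that restricting the maximization to $\Ran P$ reproduces $\langle y,(PAP)^{-1}y\rangle$, dispenses with both the auxiliary off-diagonal lemma and the limiting argument, and is arguably the cleaner and more self-contained proof; it also makes transparent why no strict-positivity of the compression beyond what $A>0$ supplies is needed. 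What the paper's approach buys is consistency of technique: the same $\epsilon$-perturbation plus operator-monotonicity device is used repeatedly in the proof of Theorem \ref{th1} for the non-invertible cases, so the lemma's proof recycles machinery already present in the paper, whereas your argument imports a separate (if standard) convex-duality identity. One small point of bookkeeping in your favor: your interpretation of $(PAP)^{-1}$ as supported on $\Ran P$ (i.e., $(PAP)^{-1}=P(PAP)^{-1}P$) agrees with the paper's convention, and the passage from the quadratic-form inequality for all $y$ to the operator inequality is immediate since both operators are Hermitian.
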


\begin{proof}
Let $R$ be the operator norm of the matrix 
$PA(I-P)$.
Lemma \ref{l3-8-2} guarantees 
the inequality
\begin{align}
PA(I-P)+(I-P)AP
\le 
\epsilon P+
\frac{R^2}{\epsilon}(I-P)
\end{align}
for any $\epsilon >0$.
Thus,
\begin{align*}
& A \le
PAP +(I-P)A(I-P)+
\epsilon P+\frac{R^2}{\epsilon}(I-P) \\
= &
P(A+\epsilon I) P 
+(I-P)(A+ \frac{R^2}{\epsilon} I)(I-P)
\end{align*}
Since the function $x \mapsto -x^{-1}$
is operator monotone,
\begin{align*}
A^{-1} \ge
(P(A+\epsilon I) P )^{-1}
+((I-P)(A+ \frac{R^2}{\epsilon} I)(I-P))^{-1}
\ge 
(P(A+\epsilon I) P )^{-1}.
\end{align*}
Taking the limit $\epsilon \to 0$,
we obtain (\ref{7-23-1}).
\end{proof}

\begin{lem} \label{l3-8-2}
Let $A$ be a positive semi-definite matrix and $P$ be a projection.
Then, the inequality
\begin{align}
PA(I-P)+(I-P)AP
\le 
\epsilon P+
\frac{R^2}{\epsilon}(I-P) \label{7-23-2}
\end{align}
holds for any $\epsilon >0$,
where
$R$ is the operator norm of the matrix $PA(I-P)$.
\end{lem}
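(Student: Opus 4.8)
The plan is to deduce (\ref{7-23-2}) from a single completing-the-square identity, an operator analogue of the scalar Young inequality $2|ab|\le \epsilon a^2+\epsilon^{-1}b^2$. Introduce the off-diagonal block $G:=(I-P)AP$, so that $G^\dagger=PA(I-P)$ and the left-hand side of (\ref{7-23-2}) is precisely $G+G^\dagger$; by definition $\|G\|=\|G^\dagger\|=\|PA(I-P)\|=R$. The only structural facts I need are that $G$ carries the range of $P$ into the range of $I-P$, which I record as the identities $GP=G$ and $PG^\dagger=G^\dagger$, together with $P^2=P$.

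First I would expand the manifestly positive operator
\begin{align*}
0\le
\Bigl(\sqrt{\epsilon}\,P-\tfrac{1}{\sqrt{\epsilon}}\,G\Bigr)
\Bigl(\sqrt{\epsilon}\,P-\tfrac{1}{\sqrt{\epsilon}}\,G\Bigr)^\dagger
=\epsilon P-(G+G^\dagger)+\tfrac{1}{\epsilon}\,GG^\dagger ,
\end{align*}
in which the two cross terms collapse to $-(G+G^\dagger)$ precisely because $PG^\dagger=G^\dagger$ and $GP=G$, while $\sqrt{\epsilon}\,P\cdot\sqrt{\epsilon}\,P=\epsilon P$. Rearranging this immediately gives $G+G^\dagger\le \epsilon P+\tfrac{1}{\epsilon}GG^\dagger$.

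It then remains only to control the residual term. Since $GG^\dagger=(I-P)APA(I-P)$ is positive semidefinite, has range contained in the range of $I-P$, and has operator norm $\|G\|^2=R^2$, it is dominated by its norm times the projection onto its supporting subspace, i.e. $GG^\dagger\le R^2(I-P)$. Substituting this bound into the previous inequality produces exactly (\ref{7-23-2}).

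I expect no genuine obstacle: the computation is short and the hypotheses (in fact only $A=A^\dagger$ is used) are mild. The single step that deserves care is the final domination $GG^\dagger\le R^2(I-P)$, which relies on the elementary observation that a positive semidefinite operator whose range lies in a subspace is bounded above by its norm times the orthogonal projection onto that subspace. Should one prefer to bypass the operator factorization, the identical estimate follows by testing the desired inequality on an arbitrary vector $\psi$, writing $\psi=P\psi+(I-P)\psi$, and applying the scalar Young inequality to $2\,\mathrm{Re}\,\langle P\psi,A(I-P)\psi\rangle\le 2R\,\|P\psi\|\,\|(I-P)\psi\|$.
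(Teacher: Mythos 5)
Your proof is correct, but it follows a genuinely different route from the paper's. The paper tests the inequality against an arbitrary normalized vector $u$: writing $t=\|Pu\|^2$, it bounds $\langle u|PA(I-P)+(I-P)AP|u\rangle \le 2\sqrt{t}\sqrt{1-t}\,R \le t\epsilon + (1-t)R^2/\epsilon$ by Cauchy--Schwarz and the scalar arithmetic-geometric mean inequality --- this is exactly the fallback argument you sketch in your final sentence. Your main argument instead works entirely at the operator level: the completion of the square
\begin{align*}
0\le \Bigl(\sqrt{\epsilon}\,P-\tfrac{1}{\sqrt{\epsilon}}\,G\Bigr)\Bigl(\sqrt{\epsilon}\,P-\tfrac{1}{\sqrt{\epsilon}}\,G\Bigr)^\dagger
=\epsilon P-(G+G^\dagger)+\tfrac{1}{\epsilon}\,GG^\dagger,
\end{align*}
with $G=(I-P)AP$, where the cross terms collapse correctly because $GP=G$ and $PG^\dagger=G^\dagger$, followed by the domination $GG^\dagger=(I-P)APA(I-P)\le \|GG^\dagger\|\,(I-P)=R^2(I-P)$, which is valid since $GG^\dagger$ is positive semidefinite with range inside the range of $I-P$ and $\|GG^\dagger\|=\|G\|^2=R^2$. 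What your route buys is a vector-free derivation that makes the origin of the two terms $\epsilon P$ and $\frac{R^2}{\epsilon}(I-P)$ transparent (they are the diagonal of a perfect square), and it isolates the only substantive step as the norm-times-projection bound on $GG^\dagger$; you also correctly note that only $A=A^\dagger$ is used, not positivity. What the paper's route buys is maximal elementarity: it needs nothing beyond Cauchy--Schwarz and a scalar inequality, with no auxiliary lemma about positive operators supported in a subspace.
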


\begin{proof}
Choose an arbitrary normalized vector $u$.
Let $t$ be $\|Pu\|^2$.
Then, 
\begin{align*}
& \langle u| PA(I-P)+(I-P)AP |u \rangle
\le 
2 \sqrt{t}\sqrt{1-t} R \\
\le &
t \epsilon + (1-t)\frac{R^2}{\epsilon}  
=
\langle u| \epsilon P+
\frac{R^2}{\epsilon}(I-P) |u \rangle,
\end{align*}
which implies (\ref{7-23-2}).
\end{proof}

\end{document}